\newtheorem{lemma}{Lemma}[section]
\newtheorem{theorem}[lemma]{Theorem}
\newtheorem{corollary}[lemma]{Corollary}
\newtheorem{definition}[lemma]{Definition}
\newtheorem{observation}[lemma]{Observation}
\begin{document}

\title{Approximation Algorithm for Fault-Tolerant Virtual Backbone in Wireless Sensor Networks}
\author{\footnotesize Jiao Zhou$^{1}$\quad Zhao Zhang$^1$\thanks{Corresponding author: hxhzz@sina.com.}\quad Xiaohui Huang$^1$ \quad Dingzhu Du$^2$\\
   {\it\small $^1$ College of Mathematics Physics and Information Engineering, Zhejiang Normal University}\\
    {\it\small Jinhua, Zhejiang, 321004, China}\\
    {\it\small $^2$ Department of Computer Science, University of Texas at Dallas}\\
    {\it\small Richardson, Texas, 75080, USA}}
\date{}
\maketitle
\begin{abstract}
To save energy and alleviate interferences in a wireless sensor network, the usage of virtual backbone was proposed. Because of accidental damages or energy depletion, it is desirable to construct a fault tolerant virtual backbone, which can be modeled as a $k$-connected $m$-fold dominating set (abbreviated as $(k,m)$-CDS) in a graph. A node set $C\subseteq V(G)$ is a $(k,m)$-CDS of graph $G$ if every node in $V(G)\backslash C$ is adjacent with at least $m$ nodes in $C$ and the subgraph of $G$ induced by $C$ is $k$-connected. In this paper, we present an approximation algorithm for the minimum $(3,m)$-CDS problem with $m\geq3$. The performance ratio is at most $\gamma$, where $\gamma=\alpha+8+2\ln(2\alpha-6)$ for $\alpha\geq4$ and $\gamma=3\alpha+2\ln2$ for $\alpha<4$, and $\alpha$ is the performance ratio for the minimum $(2,m)$-CDS problem. Using currently best known value of $\alpha$, the performance ratio is $\ln\delta+o(\ln\delta)$, where $\delta$ is the maximum degree of the graph, which is asymptotically best possible in view of the non-approximability of the problem. This is the first performance-guaranteed algorithm for the minimum $(3,m)$-CDS problem on a general graph. Furthermore, applying our algorithm on a unit disk graph which models a homogeneous wireless sensor network, the performance ratio is less than 27, improving previous ratio 62.3 by a large amount for the $(3,m)$-CDS problem on a unit disk graph.\\
{\bf Keywords:} wireless sensor network; fault-tolerance; connected dominating set; connectivity; approximation algorithm.
\end{abstract}
\section{Introduction}\label{sec1}

A wireless sensor network $($WSN$)$ consists of spatially distributed autonomous sensors to monitor physical or environmental condition, and to cooperatively pass the sensed data through the network. The development of wireless sensor networks was originally motivated by military applications, and today they are widely used in many industrial fields and everyday life, such as industrial process monitoring, traffic control, smart home, etc. If all sensors frequently transmit messages in a flooding way, then a lot of energy is wasted and intense interferences are created. To solve these problems, the concept of virtual backbone was proposed by Das and Bhargharan \cite{Das} and Ephremides {\it et al.} \cite{Ephremides} which corresponds to a connected dominating set in a graph.

Given a graph $G=(V,E)$, a subset $C$ of $V$ is said to be a {\em dominating set} $(DS)$ of $G$ if any $v\in V\setminus C$ is adjacent with at least one node of $C$. We say that a dominating set $C$ of $G$ is a {\em connected dominating set} of $G$ if $G[C]$ is connected, where $G[C]$ is the subgraph of $G$ induced by $C$. Nodes in $C$ are called {\em dominators}, while the other nodes are called {\em dominatees}.

In WSNs, a sensor may fail due to accidental damage or energy deletion. To make a virtual backbone more robust, it is suggested to use $(k,m)$-CDS.

\begin{definition}[$(k,m)$-CDS]
{\rm A node subset $C$ is a {\em $k$-connected $m$-fold dominating set}, if every node in $V\backslash C$ has at least $m$ neighbors in $C$ and $G[C]$ is $k$-connected.}
\end{definition}

In a homogeneous wireless sensor network, all sensors are equipped with omnidirectional antennas with the same transmission radius $($say, one unit$)$, and thus the transmission range of every sensor is a disk of radius one. Two sensors can communicate with each other if and only if they fall into the transmission ranges of each other. Such a setting is typically modeled as a {\em unit disk graph} $(UDG)$, in which every node of the graph corresponds to a sensor on the plane, and two nodes are adjacent if and only if the Euclidean distance between their corresponding sensors is at most one unit. There are a lot of studies on virtual backbones in UDG (see the book \cite{DuBookCDS}), but for general graphs, related studies are rare.

Notice that in a real world, the environment is very complicated, and thus it is rare that the topology can be ideally modeled as a unit disk graph. So, it is meaningful to study virtual backbone in a general graph.

In this paper, we study the minimum $(3,m)$-CDS problem with $m\geq3$ in a general graph. The strategy used in this paper is greedy. It is well known that if the potential function related with the greedy algorithm is monotone increasing and submodular, then an $O(\ln n)$ approximation ratio can be achieved. An interesting part of this paper is that we constructed a potential function which is NOT submodular, and proposed an analysis to show that the approximation ratio $O(\ln n)$ can still be achieved.

The main result of this paper is that our algorithm works for general graphs with a guaranteed performance ration $(\alpha+8+2\ln(2\alpha-6))$ for $\alpha\geq4$ and a guaranteed performance ration $(3\alpha+2\ln2)$ for $\alpha<4$, where $\alpha$ is the approximation ratio for the minimum $(2,m)$-CDS problem. In a recent paper, we \cite{Shi} proposed a $\big(\ln(\delta +m-2)+o(\ln \delta)\big)$-approximation algorithm for the minimum $(2,m)$-CDS problem on a general graph, where $\delta$ is the maximum degree of the graph. Based on it, the algorithm in this paper has performance ratio $\ln(\delta+m-2)+o(\ln \delta)$. In view of the non-approximability of this problem \cite{Guha}, the ratio is asymptotically best possible.

Furthermore, if applying our algorithm on a unit disk graph, then the performance ratio is less than $27$. Previous to this work, Wang {\it et al.} \cite{Wang1} obtained a constant approximation algorithm for $(3,m)$-CDS on UDG, and the ratio is further improved in their recent work \cite{Wang2}, which is $5\alpha$. For example, if the value of $\alpha$ in paper \cite{Shi} is used, their algorithm for $(3,3)$-CDS on UDG has performance ratio $62.3$. Our ratio improves theirs by a large amount.

Our work is based on the brick decomposition of 2-connected graphs, which is commonly known as Tutte's decomposition. This decomposition is an important tool in graph theory, and was studied extensively by a lot of researchers, including Tutte \cite{Tutte}, Hopcroft and Tarjan \cite{Hopcroft}, Cunningham and Edmonds \cite{Cunningham}, {\it et al.}. The same decomposition is also used by Wang {\it et al.} \cite{Wang2}. However, our method differs a lot from theirs since we are considering general graphs while they only considered unit disk graphs. Furthermore, our method is more refined which can be seen from the improvement on the performance ratio.

The rest of this paper is organized as follows. Section \ref{sec.2} introduces related works. Some preliminary results concerning with the brick decomposition structure of 2-connected graphs are introduced in Section \ref{sec3}. In Section \ref{sec14-9-7-1}, the algorithm is presented, and the performance ratio is analyzed. Section \ref{sec5} concludes the paper and discusses some future research directions.

\section{Related work}\label{sec.2}

The idea of using a CDS as a virtual backbone for WSN was proposed by Das and Bhargharan \cite{Das} and Ephremides {\it et al.} \cite{Ephremides}. The minimum CDS problem is NP-hard. In fact, by reducing the minimum set cover problem to the minimum CDS problem, Guha and Khuller \cite{Guha} proved that a minimum CDS cannot be approximated within $\rho\ln n$ for any $0<\rho<1$ unless $NP\subseteq DTIME(n^{O(loglog n)})$. In the same paper, they proposed two greedy algorithms with performance ratios of $2(H(\delta)+1)$ and $H(\delta)+2$, respectively, where $\delta$ is the maximum degree of the graph and $H(\cdot)$ is the harmonic number. This was improved by Ruan {\it et al.} \cite{Ruan} to $2+\ln\delta$. Du {\it et al.} \cite{Du} presented a $(1+\varepsilon)(1+\ln(\delta-1))$-approximation algorithm, where $\varepsilon$ is an arbitrary positive real number. In UDGs, a polynomial time approximation scheme $($PTAS$)$ for this problem was given by Cheng {\it et al.} \cite{Cheng}, which was generalized to higher dimensional space by Zhang {\it et al.} \cite{Zhang}. For distributed algorithms with constant performance ratios, the readers may refer to \cite{DuYingfan,Li1,LiYingshu,Wan,Wan1,Wu}.

The problem of constructing fault-tolerant virtual backbone was proposed by Dai and Wu \cite{Dai}. They proposed three heuristic algorithms for the minimum $(k,k)$-CDS problem. However, no theoretical analysis was given. Table \ref{tab1} summarizes results with guaranteed performance ratio for $(k,m)$-CDS. The last two rows are results obtained in this paper. It can be seen that we obtained the first approximation algorithm for $(3,m)$-CDS on a general graph. When the algorithm is applied on UDG, the performance ratio is reduced by a large amount compared with previous ones. For some heuristics on $(k,m)$--MCDS for general $k$ and $m$, the readers may refer to \cite{Li2,Thai,Wu1}.

\renewcommand{\arraystretch}{1.3}

\begin{table}[h!]
\centering
\begin{tabular}{ |c|c|c|c| }
\hline
graph & $(k,m)$ & ratio & reference \\
\hline
general & $(1,m)$ & $2H(\delta+m-1)$ & \cite{Li,Zhang1} \\
\hline
general & $(1,m)$ & $2+H(\delta+m-2)$ & \cite{Zhou} \\
\hline
general & $(2,m)$ & $4+\ln(\delta+m-2)+2\ln (2+\ln (\delta+m-2))$ & \cite{Shi} \\
\hline
UDG & $(2,1)$ & 72 & \cite{Wang} \\
\hline
UDG & $(1,m)$ & $\left\{\begin{array}{ll}5+5/m, & m\leq 5\\ 7, & m>5\end{array}\right.$ & \cite{Shang}\\
\hline
UDG & $(2,m)$ & $\left\{\begin{array}{ll}15+15/m, & 2\leq m\leq 5\\ 21, & m>5\end{array}\right.$ & \cite{Shang}\\
\hline
UDG & $(2,m)$ & $\left\{\begin{array}{l}7+5/m+2\ln (5+5/m),\ \mbox{for}\  2\leq m\leq 5\\ 12.89,\qquad\qquad\mbox{for}\  m>5\end{array}\right.$ & \cite{Shi}\\
\hline
UDG & $(3,m)$ & constant (280 for $m=3$) & \cite{Wang1}\\
\hline
UDG & $(3,m)$ & $\begin{array}{ll}\mbox{$5\alpha$, where $\alpha$ is performance ratio for $(2,m)$-CDS}\\ \mbox{on UDG (62.3 for $m=3$)}\end{array}$ & \cite{Wang2}\\
\hline
general & $(3,m)$ & $\begin{array}{c}
\left\{\begin{array}{l}\alpha+8+2\ln(2\alpha-6)\ \mbox{for}\  \alpha\geq4\\ 3\alpha+2\ln2,\ \mbox{for}\  \alpha<4\end{array}\right.\\ \mbox{where $\alpha$ is performance ratio for $(2,m)$-CDS} \end{array}$ & *\\

\hline
UDG & $(3,m)$ & $\left\{\begin{array}{ll} 26.34, & m=3\\ 25.68, & m=4\\26.86, & m\geq 5\end{array}\right.$ & *\\
\hline
\end{tabular}
\vskip 0.2cm \caption{Results on $(k,m)$-CDS with guaranteed performance ratio}
\label{tab1}
\end{table}

\section{Preliminaries}\label{sec3}

The following lemma is well known in graph theory \cite{Bondy}.

\begin{lemma}\label{lem14-6-9-1}
Suppose $H_1$ is a $k$-connected graph and
$H_2$ is obtained from $H_1$ by adding a new
node $u$ and joining $u$ to at least $k$ nodes
of $H_1$. Then $H_2$ is also $k$-connected.
\end{lemma}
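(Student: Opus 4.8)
The plan is to work directly from the vertex-cut characterization of $k$-connectivity: a graph is $k$-connected precisely when it has more than $k$ vertices and remains connected after the deletion of any set of at most $k-1$ vertices. First I would dispose of the order condition. Since $H_1$ is $k$-connected it has at least $k+1$ vertices, so $H_2$ has at least $k+2$ vertices, and in particular more than $k$; this part is immediate.

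The heart of the argument is to fix an arbitrary deletion set $S\subseteq V(H_2)$ with $|S|\leq k-1$ and show that $H_2-S$ is connected, splitting into two cases according to whether the new node $u$ lies in $S$. If $u\in S$, then $H_2-S=H_1-(S\setminus\{u\})$, and since $|S\setminus\{u\}|\leq k-2$ and $H_1$ is $k$-connected, this graph is connected. If $u\notin S$, then $S\subseteq V(H_1)$ with $|S|\leq k-1$, so $H_1-S$ is connected by the $k$-connectivity of $H_1$; it remains only to attach $u$. Here I would invoke the hypothesis that $u$ has at least $k$ neighbors in $H_1$: because $|S|\leq k-1$, at least one of these neighbors survives in $H_1-S$, so $u$ is adjacent to the connected graph $H_1-S$, and hence $H_2-S$ is connected.

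There is no serious obstacle in this lemma; the only point requiring care is the bookkeeping on cardinalities, specifically the observation that a set of size at most $k-1$ cannot cover all $k$ neighbors of $u$, which is exactly what guarantees that $u$ stays attached in the second case. An alternative route would be to invoke Menger's theorem and exhibit $k$ internally vertex-disjoint paths between each pair of vertices, but the cut-based argument above is shorter and avoids constructing paths explicitly, so I would prefer it.
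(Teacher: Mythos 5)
Your proof is correct; note that the paper itself offers no proof of this lemma, stating it as ``well known in graph theory'' with a citation to \cite{Bondy}, and your cut-based case analysis (splitting on whether the deletion set contains $u$, with the key cardinality observation that a set of at most $k-1$ vertices cannot cover all $k$ neighbors of $u$) is precisely the standard textbook argument for this fact. Nothing further is needed.
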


As a consequence, we have the following result.
\begin{corollary}\label{cor14-5-9-2}
Suppose $G$ is a $k$-connected graph, $k$ and $m$ are two positive integers with
$m\geq k$, and $C$ is a $(k,m)$-CDS of $G$.
For any $U\subseteq V(G)\setminus C$,

$(\romannumeral1)$ node set $C\cup U$
is also a $(k,m)$-CDS of $G$, and

$(\romannumeral2)$ no node in $U$ is involved in any $k$-node cut of $G[C\cup U]$.
\end{corollary}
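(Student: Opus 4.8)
The statement is Corollary 1.3, which asserts that enlarging a $(k,m)$-CDS $C$ of a $k$-connected graph $G$ by any set $U$ of dominatees preserves the $(k,m)$-CDS property, and moreover that the newly added nodes never lie in a $k$-node cut of the enlarged induced subgraph. My plan is to reduce everything to Lemma 1.1 by adding the nodes of $U$ one at a time, so that at each step I am inserting a single new node into an already $k$-connected induced subgraph.

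Let me sketch the two parts.

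For part (i), I would first observe that the $m$-fold domination condition is immediate: every node in $V(G)\setminus(C\cup U)$ is a subset of $V(G)\setminus C$, so it still has its $m$ neighbors in $C\subseteq C\cup U$. The work is to show $G[C\cup U]$ is $k$-connected. Here I would enumerate $U=\{u_1,\dots,u_t\}$ in an arbitrary order and argue by induction on the partial unions $C_i=C\cup\{u_1,\dots,u_i\}$, with $C_0=C$ being $k$-connected by hypothesis. To apply Lemma 1.1 in the inductive step, I need that $u_{i+1}$ is adjacent to at least $k$ nodes of $C_i$. Since $u_{i+1}\in V(G)\setminus C$ and $m\geq k$, it has at least $m\geq k$ neighbors \emph{in $C$}, hence at least $k$ neighbors in $C\subseteq C_i$; thus $G[C_{i+1}]$ is $k$-connected by Lemma 1.1. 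Iterating through $i=0,\dots,t-1$ gives that $G[C\cup U]$ is $k$-connected, completing part (i).

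For part (ii), I would argue that a $k$-node cut of $G[C\cup U]$ cannot contain any $u\in U$. Suppose for contradiction some $k$-node cut $S$ contains a node $u\in U$. The key fact to exploit is that $u$ has at least $m\geq k$ neighbors in $C$, all of which survive in $G[(C\cup U)\setminus S]$ except those that themselves lie in $S$; since $|S|=k$ and $S$ already uses up one slot on $u$, the remaining $k-1$ nodes of $S$ cannot separate $u$'s $\geq k$ neighbors in $C$ from each other inside the $k$-connected graph $G[C]$ (equivalently, $G[C\setminus S]$ remains connected because removing $k-1\leq k-1$ nodes from the $k$-connected $G[C]$ leaves it connected). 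Consequently $u$ together with its neighbors can be reattached without disconnecting anything, contradicting that $S$ was a genuine cut. The cleanest way to phrase this is: removing $S$ from $G[C\cup U]$ and then deleting $u$ from $S$ leaves a node set whose induced subgraph is still connected, since $C\setminus S$ (connected, as $|S\cap C|\le k-1$) is connected to $u$ through $u$'s surviving neighbors in $C$, and every other node of $U\setminus S$ likewise attaches to $C\setminus S$; this shows $S$ was not minimal, a contradiction.

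The main obstacle will be part (ii), and specifically making the separation argument fully rigorous rather than merely plausible. The delicate point is to verify that after deleting the $k$-cut $S$, the core $C\setminus S$ is still connected (this uses $k$-connectivity of $G[C]$ together with $|C\cap S|\le k-1$, which holds because $u\in U\cap S$) and that each surviving dominatee reattaches to this core; one must be careful that a node $u\in U\cap S$ genuinely has a surviving neighbor in $C\setminus S$, which again follows from $u$ having $m\geq k$ neighbors in $C$ while $S$ can delete at most $k-1$ of them. Handling this counting cleanly, and confirming it contradicts minimality of the cut, is where the real care is needed.
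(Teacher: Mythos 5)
Your proposal is correct and follows essentially the same route as the paper, which states this corollary without proof as an immediate consequence of Lemma \ref{lem14-6-9-1}: part $(\romannumeral1)$ is exactly the one-node-at-a-time application of that lemma (each $u\in U$ has at least $m\geq k$ neighbors in $C$, so $k$-connectivity is preserved at every insertion, while domination is trivially inherited), and part $(\romannumeral2)$ is the intended neighbor-counting argument ($|S\cap C|\leq k-1$ keeps $G[C\setminus S]$ connected and nonempty, and every node of $U\setminus S$ retains a neighbor in $C\setminus S$). One small repair to your closing framing: the phrase ``this shows $S$ was not minimal'' understates what your own ingredients prove --- since every surviving node attaches directly to the connected core $C\setminus S$ without routing through $u$, the argument shows $G[(C\cup U)\setminus S]$ itself is connected, so $S$ is not a cut at all, which is the direct contradiction needed (minimality of the cut plays no role in the statement).
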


In the following, we focus on $2$-connected graphs.

\begin{definition}[2-separator]
{\rm Suppose $H$ is a $2$-connected graph. A node set $\{u,v\}$ is a {\em $2$-separator} of $H$ if $H-\{u,v\}$ is not connected. The {\em local connectivity} between two nodes $u$ and $v$ in graph $H$ is the maximum number of internally disjoint $(u,v)$-paths in $H$, denoted as $p_H(u,v)$. A 2-separator $\{u,v\}$ is {\em good} if $p_{H}(u,v)\geq3$, otherwise it is {\em bad}.}
\end{definition}

For example, in Fig.\ref{fig14-6-9-5}$(a)$, $\{u_1,v_1\}$ is a good 2-separator of the first graph and $\{u_2,v_2\}$ is a bad 2-separator of the graph containing it (which is a 4-cycle). The following lemma characterizes 2-connected graphs without good 2-separators.

\begin{lemma}[\cite{Zhang2}]\label{lemma2}
Let $H$ be a $2$-connected graph which has no good $2$-separator. Then $H$ is either $3$-connected or a cycle.
\end{lemma}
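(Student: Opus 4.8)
The plan is to argue by induction on $|V(H)|$, the base case being the triangle $K_3$, which is a cycle. If $H$ is $3$-connected we are done, so I would assume $H$ is $2$-connected but not $3$-connected and aim to show $H$ is a cycle. Then $H$ has a $2$-separator $\{u,v\}$, which by hypothesis is bad, so $p_H(u,v)=2$. First I would record the structure forced by badness. Since $H$ is $2$-connected, every component of $H-\{u,v\}$ must send an edge to each of $u$ and $v$ (otherwise deleting a single one of $u,v$ would disconnect that component), so each component yields an internally disjoint $(u,v)$-path; as $p_H(u,v)=2$ there are exactly two components $C_1,C_2$, and moreover $u,v$ are non-adjacent, since an edge $uv$ would supply a third internally disjoint path and force $p_H(u,v)\geq 3$.

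Next I would pass to the two \emph{side graphs} $H_i^{+}=H[V(C_i)\cup\{u,v\}]+uv$ obtained by keeping one side and adding a virtual edge $uv$. Each $H_i^{+}$ has strictly fewer vertices than $H$, so induction will apply once the hypotheses are verified. Checking $2$-connectivity is routine: deleting $u$ or $v$ leaves $C_i$ attached to the other through the rest of the component, and deleting an internal vertex $x\in C_i$ cannot isolate a piece, because such a piece would reach the rest of $H$ only through $\{u,v\}$ and would hence already be cut off by $x$ in $H$, contradicting the $2$-connectivity of $H$. The crucial—and I expect hardest—step is to show that every $2$-separator of $H_i^{+}$ is again bad. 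I would argue that any $2$-separator $\{a,b\}$ of $H_i^{+}$ is distinct from $\{u,v\}$ (the virtual edge forbids separating $u$ from $v$) and is in fact a $2$-separator of $H$, and then control the local connectivity: any family of internally disjoint $(a,b)$-paths in $H_i^{+}$ uses the virtual edge at most once, and replacing that edge by a $(u,v)$-path through the discarded component $C_{3-i}$ converts it into an equally large family in $H$. Hence $p_{H_i^{+}}(a,b)\leq p_H(a,b)=2$, so $\{a,b\}$ is bad.

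Finally I would invoke the induction hypothesis: each $H_i^{+}$ is either $3$-connected or a cycle. The $3$-connected case is impossible, and ruling it out is the decisive use of the no-good-separator hypothesis: if $H_i^{+}$ were $3$-connected, then $u$ and $v$ (which are adjacent in $H_i^{+}$) would be joined by at least two internally disjoint paths through the interior of $C_i$, and together with a $(u,v)$-path through $C_{3-i}$ these give at least three internally disjoint $(u,v)$-paths in $H$, making $\{u,v\}$ a good $2$-separator of $H$, a contradiction. Therefore each $H_i^{+}$ is a cycle, so $H_i^{+}-uv=H[V(C_i)\cup\{u,v\}]$ is a $(u,v)$-path. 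Gluing the two paths at their common endpoints $u$ and $v$ exhibits $H$ as a single cycle, completing the induction. The one delicate point to get right throughout is the local-connectivity bookkeeping relating $H$ to its side graphs; everything else follows directly from $2$-connectivity and Menger's theorem.
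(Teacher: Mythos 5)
Your proposal is correct, but there is nothing in the paper to compare it against: the paper states Lemma \ref{lemma2} as a quoted result from \cite{Zhang2} and gives no proof of it at all, so your argument is a genuine addition rather than a variant of an in-paper proof. What you do is a self-contained induction on $|V(H)|$ whose engine is exactly the machinery the paper introduces right after the lemma: your ``side graphs'' $H_i^{+}$ are precisely the paper's marked $S$-components (induced $S$-component plus virtual edge $uv$), and your routine 2-connectivity check re-derives Lemma \ref{lemma1} inline. The three key steps all hold up: badness of $\{u,v\}$ forces exactly two components of $H-\{u,v\}$ and $uv\notin E(H)$; the local-connectivity transfer $p_{H_i^{+}}(a,b)\le p_H(a,b)$ works because at most one path in an internally disjoint family can traverse the virtual edge (two such paths would share both $u$ and $v$ as non-endpoints) and that edge can be rerouted through the discarded component on fresh internal vertices; and 3-connectivity of a side graph is correctly excluded since at most one of the $\ge 3$ internally disjoint $(u,v)$-paths in $H_i^{+}$ can be the edge $uv$ itself, leaving two through $C_i$ that combine with one through $C_{3-i}$ to make $\{u,v\}$ good. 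Two spots are compressed but are one-line repairs rather than gaps: first, the reason $\{a,b\}\neq\{u,v\}$ is simply that $H_i^{+}-\{u,v\}=C_i$ is connected (your parenthetical about the virtual edge forbidding the separation of $u$ from $v$ is the right fact for a different purpose, namely that any component of $H_i^{+}-\{a,b\}$ missing $a,b$ can avoid at most the side not containing the adjacent pair $u,v$); second, the claim that a 2-separator of $H_i^{+}$ is a 2-separator of $H$ deserves the explicit observation that a component $K$ of $H_i^{+}-\{a,b\}$ avoiding both $u$ and $v$ lies inside $C_i$, so all of its $H$-neighbors already lie in $H_i^{+}$, whence $K$ is still cut off from the nonempty remainder (which contains $C_{3-i}$) in $H-\{a,b\}$. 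With those two sentences added, the induction closes cleanly, and pasting the two Hamiltonian $(u,v)$-paths $H_i^{+}-uv$ does exhibit $H$ as a cycle because $H$ has no edges between $C_1$ and $C_2$ and no edge $uv$. Note also that your argument incidentally proves the direction the paper actually uses (a non-3-connected, non-cycle 2-connected graph has a good 2-separator), which is what drives the decomposition into $T$-bricks and $R$-bricks in Section \ref{sec3}.
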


In view of Lemma \ref{lemma2}, we say that a $2$-connected graph without good $2$-separators is a {\em $T$-brick} if it is $3$-connected or an {\em $R$-brick} if it is a cycle.

Suppose $H$ is a $(2,m)$-CDS of a 3-connected graph $G$, where $m\geq 3$. In view of Corollary \ref{cor14-5-9-2}, adding nodes to $H$ does not incur new 2-separators. So, to augment $H$ into a $(3,m)$-CDS, it suffices to eliminate all 2-separators in $H$. However, the number of 2-separators might be exponential. In order that the algorithm is polynomial, 2-separators have to be eliminated in a neat way. For this purpose, we need a structural characterization of 2-connected graphs, based on the concept of marked components defines as follows.

\begin{definition}[$S$-component and marked $S$-component]
{\rm Let $H$ be a 2-connected graph, $S=\{u,v\}$ be a $2$-separator of $H$, and $C$ be a connected component of $H-S$. The subgraph $H[C\cup S]$ is called an {\em $S$-component} of $H$. For an $S$-component $H[C\cup S]$, add a virtual edge $uv$ if $uv\notin E(H)$ and do nothing if $uv\in E(H)$, call the resulting graph as a {\em marked $S$-component}.}
\end{definition}

For example, in the first graph of Fig.\ref{fig14-6-9-5}$(a)$, $S_1=\{u_1,v_1\}$ is a 2-separator. Splitting off the graph through $S_1$ results in three marked $S_1$-components as in the second graph of Fig.\ref{fig14-6-9-5}$(a)$. Those dotted edges are virtual edges. The role virtual edges play is to guarantee the 2-connectedness of marked components, as indicated by Lemma \ref{lemma1} whose proof can be found in \cite{Bondy}.

\begin{figure*}[!htbp]\setlength{\unitlength}{0.9pt}
\begin{center}
\begin{picture}(40,130)
\put(0,0){\circle*{5}}
\put(40,0){\circle*{5}}
\put(0,25){\circle*{5}}
\put(40,25){\circle*{5}}
\put(0,50){\circle*{5}}
\put(40,50){\circle*{5}}
\put(20,70){\circle*{5}}
\put(0,90){\circle*{5}}
\put(40,90){\circle*{5}}
\put(0,130){\circle*{5}}
\put(40,130){\circle*{5}}
\qbezier(0,0)(20,0)(40,0)
\qbezier(0,0)(20,12.5)(40,25)
\qbezier(0,0)(20,25)(40,50)
\qbezier(0,25)(20,12.5)(40,0)
\qbezier(0,25)(20,25)(40,25)
\qbezier(0,25)(20,37.5)(40,50)
\qbezier(0,50)(20,25)(40,0)
\qbezier(0,50)(20,37.5)(40,25)
\qbezier(0,50)(20,50)(40,50)
\qbezier(0,50)(0,90)(0,130)
\qbezier(40,50)(40,90)(40,130)
\qbezier(20,70)(10,80)(0,90)
\qbezier(20,70)(30,80)(40,90)
\qbezier(0,90)(20,110)(40,130)
\qbezier(40,90)(20,110)(0,130)
\qbezier(0,130)(20,130)(40,130)
\put(48,87){\vector(2,-1){30}}
\put(48,90){\vector(2,1){30}}
\put(48,93){\vector(2,3){30}}
\put(-14,88){$u_1$}\put(25,88){$v_1$}\put(-34,88){$S_1$}
\end{picture}
\hskip 1.3cm\begin{picture}(40,170)
\put(0,0){\circle*{5}}
\put(40,0){\circle*{5}}
\put(0,25){\circle*{5}}
\put(40,25){\circle*{5}}
\put(0,50){\circle*{5}}
\put(40,50){\circle*{5}}
\put(0,90){\circle*{5}}
\put(40,90){\circle*{5}}
\put(20,100){\circle*{5}}
\put(0,120){\circle*{5}}
\put(40,120){\circle*{5}}
\put(0,130){\circle*{5}}
\put(40,130){\circle*{5}}
\put(0,170){\circle*{5}}
\put(40,170){\circle*{5}}
\put(12,-20){(a)}
\put(42,148){$B_1$}\put(42,103){$B_2$}
\qbezier(0,0)(20,0)(40,0)
\qbezier(0,0)(20,12.5)(40,25)
\qbezier(0,0)(20,25)(40,50)
\qbezier(0,25)(20,12.5)(40,0)
\qbezier(0,25)(20,25)(40,25)
\qbezier(0,25)(20,37.5)(40,50)
\qbezier(0,50)(20,25)(40,0)
\qbezier(0,50)(20,37.5)(40,25)
\qbezier(0,50)(20,50)(40,50)
\qbezier(0,50)(0,70)(0,90)
\qbezier(40,50)(40,70)(40,90)
{\linethickness{0.3mm}\qbezier[17](0,90)(20,90)(40,90)}
\qbezier(20,100)(10,110)(0,120)
\qbezier(20,100)(30,110)(40,120)
{\linethickness{0.3mm}\qbezier[17](0,120)(20,120)(40,120)}
{\linethickness{0.3mm}\qbezier[17](0,130)(20,130)(40,130)}
\qbezier(0,130)(0,150)(0,170)
\qbezier(0,130)(20,150)(40,170)
\qbezier(40,130)(20,150)(0,170)
\qbezier(40,130)(40,150)(40,170)
\qbezier(0,170)(20,170)(40,170)
\put(48,52){\vector(3,2){30}}\put(48,48){\vector(2,-1){30}}
\put(-16,46){$S_2$}
\end{picture}
\hskip 1.3cm\begin{picture}(40,100)
\put(0,0){\circle*{5}}
\put(40,0){\circle*{5}}
\put(0,25){\circle*{5}}
\put(40,25){\circle*{5}}
\put(0,50){\circle*{5}}
\put(40,50){\circle*{5}}
\put(0,60){\circle*{5}}
\put(40,60){\circle*{5}}
\put(0,100){\circle*{5}}
\put(40,100){\circle*{5}}
\put(42,76){$B_3$}\put(42,22){$B_4$}
\qbezier(0,0)(20,0)(40,0)
\qbezier(0,0)(20,12.5)(40,25)
\qbezier(0,0)(20,25)(40,50)
\qbezier(0,25)(20,12.5)(40,0)
\qbezier(0,25)(20,25)(40,25)
\qbezier(0,25)(20,37.5)(40,50)
\qbezier(0,50)(20,25)(40,0)
\qbezier(0,50)(20,37.5)(40,25)
\qbezier(0,50)(20,50)(40,50)
\qbezier(0,60)(20,60)(40,60)
\qbezier(0,60)(0,80)(0,100)
\qbezier(40,60)(40,80)(40,100)
\put(3,64){$u_2$}\put(26,90){$v_2$}
{\linethickness{0.3mm}\qbezier[17](0,100)(20,100)(40,100)}
\end{picture}
\hskip 2.4cm\begin{picture}(40,130)
\put(0,0){\circle*{5}}
\put(40,0){\circle*{5}}
\put(0,25){\circle*{5}}
\put(40,25){\circle*{5}}
\put(0,50){\circle*{5}}
\put(40,50){\circle*{5}}
\put(-20,76){\circle*{5}}
\put(0,90){\circle*{5}}
\put(40,90){\circle*{5}}
\put(0,130){\circle*{5}}
\put(40,130){\circle*{5}}
\put(12,-20){(b)}
\qbezier(0,0)(20,0)(40,0)
\qbezier(0,0)(20,12.5)(40,25)
\qbezier(0,0)(20,25)(40,50)
\qbezier(0,25)(20,12.5)(40,0)
\qbezier(0,25)(20,25)(40,25)
\qbezier(0,25)(20,37.5)(40,50)
\qbezier(0,50)(20,25)(40,0)
\qbezier(0,50)(20,37.5)(40,25)
\qbezier(0,50)(20,50)(40,50)
\qbezier(0,50)(0,90)(0,130)
\qbezier(40,50)(40,90)(40,130)
\qbezier(-20,76)(-10,83)(0,90)
\qbezier(-20,76)(10,83)(40,90)
{\linethickness{0.3mm}\qbezier[17](0,90)(20,90)(40,90)}
\qbezier(0,90)(20,110)(40,130)
\qbezier(40,90)(20,110)(0,130)
\qbezier(0,130)(20,130)(40,130)
\put(20,25){\oval(60,62)}
\put(20,70){\oval(60,52)}
\put(20,110){\oval(60,52)}
\put(10,85){\oval(90,32)}
\end{picture}
\hskip 1.5cm\begin{picture}(40,130)
\put(0,90){\circle*{5}}
\put(30,0){\circle*{5}}
\put(30,25){\circle*{5}}
\put(30,50){\circle*{5}}
\put(30,90){\circle*{5}}
\put(30,130){\circle*{5}}
\put(15,-20){(c)}
\qbezier(0,90)(15,90)(30,90)
\qbezier(30,0)(30,65)(30,130)
\put(35,128){$B_1$}\put(-2,94){$B_2$}\put(35,48){$B_3$}\put(35,-2){$B_4$}
\put(35,88){$S_1$}\put(35,23){$S_2$}
\end{picture}
\vskip 0.5cm\caption{$(a)$ Decomposition through good separators. Dotted edges are virtual edges which are added to form the marked $S$-components. $(b)$ Brick structure of graph $H$. Each ellipse indicates a brick. $(c)$ The brick-tree $B(H)$.}\label{fig14-6-9-5}
\end{center}
\end{figure*}
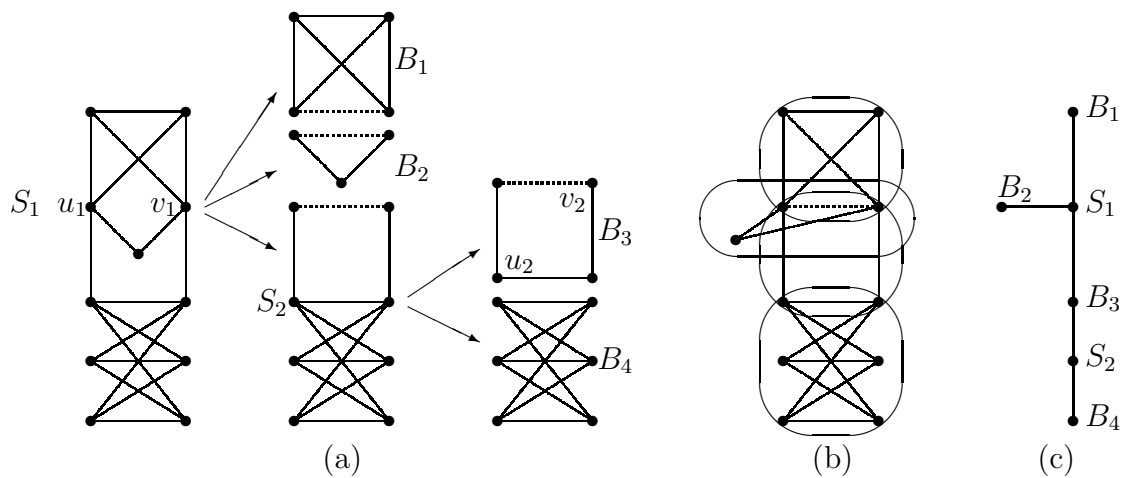

\begin{lemma}\label{lemma1}
Let $H$ be a 2-connected graph and $S$ be a $2$-separator of $H$. Then the marked $S$-components of $H$ are also 2-connected.
\end{lemma}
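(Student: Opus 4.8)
The plan is to fix a single connected component $C$ of $H-S$ (with $S=\{u,v\}$) and work with the marked $S$-component $H'=H[C\cup S]$ together with the virtual edge $uv$, which is present in $H'$ regardless of whether $uv\in E(H)$. Then I would verify the defining properties of $2$-connectedness directly: $H'$ has at least three vertices, is connected, and has no cut vertex. The vertex count is immediate, since $C\neq\emptyset$ and $|S|=2$. The key preliminary observation I would record first is that each of $u$ and $v$ has a neighbour in $C$: if, say, $u$ had no neighbour in $C$, then every edge of $H$ leaving $C$ would end at $v$, so $v$ alone would separate $C$ from $V(H)\setminus(C\cup S)$, which is nonempty because $S$ is a $2$-separator and hence $H-S$ has at least two components; this contradicts the $2$-connectedness of $H$. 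This observation immediately yields connectivity of $H'$, and also of $H'-u$ and $H'-v$.

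For the no-cut-vertex property I would argue by contradiction, splitting on the location of a putative cut vertex $w$. If $w\in S$, say $w=u$, then deleting $u$ also removes the virtual edge, leaving $H'-u=H[C\cup\{v\}]$; since $C$ is connected and $v$ has a neighbour in $C$, this graph is connected, so $u$ is not a cut vertex, and by symmetry neither is $v$.

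The substantive case is $w\in C$, and here the virtual edge does the real work. Because $u$ and $v$ are joined by $uv$ in $H'$, they remain in a common component of $H'-w$; hence if $H'-w$ were disconnected there would be a component $B$ with $B\cap S=\emptyset$, that is $B\subseteq C\setminus\{w\}$. I would then show that $N_H(B)\subseteq\{w\}$: in $H$ no vertex of $C$ is adjacent to any vertex outside $C\cup S$, so all $H$-neighbours of $B$ lie in $C\cup S$, while being a component of $H'-w$ forces $B$ to have no neighbour in $(C\cup S)\setminus(B\cup\{w\})$. Thus the only possible $H$-neighbour of $B$ is $w$, so $w$ separates the nonempty set $B$ from $u$ in $H$ (note $u\notin B\cup\{w\}$), making $w$ a cut vertex of $H$ and contradicting $2$-connectedness.

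I expect the third case to be the only real obstacle, and the crux is precisely the role of the virtual edge: it is exactly what guarantees that the offending component $B$ avoids $S$, so that deleting $w$ in the original graph $H$ genuinely isolates $B$ and reduces the situation to a cut vertex of $H$. Without the edge $uv$ one could have $u$ and $v$ land in different components of $H'-w$, and this reduction would break down. Everything else amounts to bookkeeping about which edges are allowed to leave $C$, which follows from $S$ being a separator.
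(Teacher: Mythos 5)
Your proof is correct and complete. Note that the paper itself does not prove this lemma at all: it simply defers to the textbook \cite{Bondy}, so there is no in-paper argument to compare against. Your write-up supplies what is essentially the standard argument, and all three verifications go through: the preliminary observation that each of $u$ and $v$ has a neighbour in $C$ (otherwise the other vertex of $S$ would be a cut vertex of $H$, using that $H-S$ has a second component) correctly disposes of both connectivity of $H'$ and the case of a putative cut vertex $w\in S$; and in the substantive case $w\in C$ you correctly identify the crux, namely that the virtual edge $uv$ forces $u$ and $v$ into one component of $H'-w$, so any second component $B$ satisfies $B\subseteq C\setminus\{w\}$, whence $N_H(B)\subseteq\{w\}$ (all $H$-edges inside $C\cup S$ are edges of $H'$, and $B\subseteq C$ has no $H$-neighbours outside $C\cup S$), exhibiting $w$ as a cut vertex of $H$ separating $B$ from $u$ --- a contradiction. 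This reduction of a separating vertex of the marked component to a cut vertex of the original graph is exactly the point of adding the virtual edge, as you observe, and your argument would indeed fail without it.
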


Let $G$ be a 3-connected graph and $H$ be the subgraph of $G$ induced by a $(2,m)$-CDS of $G$. If $H$ has a good $2$-separator $S$, then it can be decomposed into several marked $S$-components, which are also 2-connected by Lemma \ref{lemma1}. If any one of these marked $S$-components has a good $2$-separator, it can be further decomposed into smaller marked components. Such a decomposition continues until $H$ is decomposed into marked components without good 2-separators. In other words, $H$ can be iteratively decomposed into $T$-bricks and $R$-bricks through good 2-separators.

Pasting these bricks through those good $2$-separators which have been used in the decomposition procedure, we see that the brick structure of $H$ is tree-like in the following sense: Let $B(H)$ be a bipartite graph with bipartition $(\mathcal B,\mathcal S)$, where $\mathcal B$ is the set of bricks and $\mathcal S$ is the set of good 2-separators used in the above decomposition. A brick $B\in \mathcal B$ is adjacent with a 2-separator $S\in\mathcal S$ if and only if $S$ is contained in $B$. Notice that there is no sequence of bricks $B_1,\ldots,B_t$ such that $B_i$ shares a 2-separator with $B_{i+1}$ for $i=1,\ldots,t-1$ and $B_t$ shares a 2-separator with $B_1$ (otherwise $\bigcup_{i=1}^tB_i$ will be 3-connected). So, the graph $B(H)$ is acyclic. Clearly, $B(H)$ is connected. So, $B(H)$ is a tree, which is called the {\em brick-tree} of $H$.
Such a decomposition is illustrated in Fig.\ref{fig14-6-9-5}.

\section{Algorithm and Analysis}\label{sec14-9-7-1}

This section presents our greedy algorithm and analyzes its performance ratio. We first construct a potential function $f$ which will be used in the greedy algorithm, and derive some properties about $f$.

\subsection{Potential Function}\label{subsecFunction}

\begin{definition}[brick-bridge]\label{def6-1-1}
{\rm Suppose $H$ is a $2$-connected graph. A path $P$ is called a
{\em brick-bridge of $H$} if it satisfies the following three conditions:

$(\romannumeral1)$ all internal nodes of $P$ are outside of $H$ and the two ends of $P$ are in $H$;

$(\romannumeral2)$ the two ends of $P$ are nonadjacent in $H$;

$(\romannumeral3)$ the two ends of $P$ do not belong to a same
$T$-brick of $H$.

\noindent Denote by $int(P)$ the set of internal nodes of $P$.}
\end{definition}

By the above definition, any brick-bridge either ``strides over'' different bricks or ``strides over'' non-adjacent nodes of an $R$-brick.

As we have explained in Section \ref{sec3}, the assumption $m\geq 3$ guarantees that adding brick-bridges to a $(2,m)$-CDS does not incur new 2-separators. Fig.\ref{fig14-6-18-2} gives us some idea of how the brick-structure is changed after adding internal nodes of some brick-bridge. Roughly speaking, if the brick-bridge $P$ strides over bricks $B$ and $B'$ of $G[C]$, let $Q_{BB'}$ be the unique path on the brick tree of $G[C]$ connecting $B$ and $B'$, and let $\mathcal Q_{BB'}$ be the set of bricks on $Q_{BB'}$, then all $T$-bricks in $\mathcal Q_{BB'}$ are merged into a new $T$-brick of $G[C\cup int(P)]$, and every $R$-brick in $\mathcal Q_{BB'}$ is divided into smaller $R$-bricks of $G[C\cup int(P)]$ by this new $T$-brick.

\begin{figure*}[!htbp]
\begin{center}
\hskip -3cm\begin{picture}(160,110)\setlength{\unitlength}{0.9pt}
\put(20,60){\circle*{5}}\put(40,60){\circle*{5}}
\put(120,60){\circle*{5}}\put(140,60){\circle*{5}}\put(160,60){\circle*{5}}\put(80,60){\circle*{5}}
\put(50,100){\circle*{5}}\put(110,100){\circle*{5}}
\put(55,74){\circle*{5}}\put(65,74){\circle*{5}}
\put(40,30){\circle*{5}}\put(80,30){\circle*{5}}\put(120,30){\circle*{5}}
\put(60,15){\circle*{5}}\put(100,15){\circle*{5}}
\qbezier(0,30)(0,45)(0,60)\qbezier(0,30)(20,30)(40,30)\qbezier(0,60)(20,60)(40,60)\qbezier(40,30)(40,45)(40,60)
\qbezier(80,30)(80,45)(80,60)\qbezier(40,30)(60,0)(80,30)\qbezier(80,30)(100,0)(120,30)\qbezier(40,60)(60,90)(80,60)\qbezier(80,60)(100,90)(120,60)
\qbezier(120,30)(120,45)(120,60)\qbezier(120,30)(140,30)(160,30)\qbezier(120,60)(140,60)(160,60)\qbezier(160,30)(160,45)(160,60)
{\linethickness{0.5mm}\qbezier[9](40,60)(45,80)(50,100)\qbezier[9](140,60)(125,80)(110,100)}
{\linethickness{0.5mm}\qbezier[9](20,60)(35,80)(50,100)\qbezier[11](50,100)(80,100)(110,100)\qbezier[11](160,60)(135,80)(110,100) }
\put(70,-5){(aa)}
\end{picture}
\hskip 1cm\begin{picture}(60,110)\setlength{\unitlength}{0.9pt}
\put(20,60){\circle*{5}}\put(40,60){\circle*{5}}\put(80,60){\circle*{5}}
\put(45,65){\circle*{5}}\put(75,65){\circle*{5}}
\put(120,60){\circle*{5}}\put(140,60){\circle*{5}}\put(160,60){\circle*{5}}
\put(50,100){\circle*{5}}\put(110,100){\circle*{5}}
\put(55,79){\circle*{5}}\put(65,79){\circle*{5}}
\put(40,30){\circle*{5}}\put(80,30){\circle*{5}}\put(120,30){\circle*{5}}
\put(45,25){\circle*{5}}\put(75,25){\circle*{5}}\put(85,25){\circle*{5}}\put(115,25){\circle*{5}}
\put(60,10){\circle*{5}}\put(100,10){\circle*{5}}
\qbezier(0,30)(0,45)(0,60)\qbezier(0,30)(20,30)(40,30)\qbezier(0,60)(20,60)(40,60)\qbezier(40,30)(40,45)(40,60)\qbezier(80,30)(80,45)(80,60)
\qbezier(45,25)(60,-5)(75,25)\qbezier(85,25)(100,-5)(115,25)\qbezier(45,65)(60,95)(75,65)\qbezier(80,60)(100,90)(120,60)
\qbezier(120,30)(120,45)(120,60)\qbezier(120,30)(140,30)(160,30)\qbezier(120,60)(140,60)(160,60)\qbezier(160,30)(160,45)(160,60)
\qbezier(20,60)(35,80)(50,100)\qbezier(40,60)(45,80)(50,100)\qbezier(50,100)(80,100)(110,100)\qbezier(140,60)(125,80)(110,100)\qbezier(160,60)(135,80)(110,100)
{\linethickness{0.35mm}\qbezier[13](45,65)(60,65)(75,65)\qbezier[17](40,60)(60,60)(80,60)
\qbezier[35](40,30)(80,30)(120,30)\qbezier[13](45,25)(60,25)(75,25)\qbezier[13](85,25)(100,25)(115,25)}
\put(70,-5){(ab)}
\end{picture}

\hskip -1cm\begin{picture}(80,110)
\put(20,60){\circle*{5}}\put(40,60){\circle*{5}}\put(80,60){\circle*{5}}
\put(50,100){\circle*{5}}
\put(55,74){\circle*{5}}\put(65,74){\circle*{5}}
\put(40,30){\circle*{5}}\put(80,30){\circle*{5}}
\put(60,15){\circle*{5}}
\qbezier(0,30)(0,45)(0,60)\qbezier(0,30)(20,30)(40,30)\qbezier(0,60)(20,60)(40,60)\qbezier(40,30)(40,45)(40,60)
\qbezier(80,30)(80,45)(80,60)\qbezier(40,30)(60,0)(80,30)\qbezier(40,60)(60,90)(80,60)
{\linethickness{0.5mm}\qbezier[9](20,60)(35,80)(50,100)\qbezier[8](40,60)(45,80)(50,100)\qbezier[7](50,100)(57.5,87)(65,74) }
\put(30,-5){(ba)}
\end{picture}
\hskip 0.3cm\begin{picture}(60,110)
\put(20,60){\circle*{5}}\put(40,60){\circle*{5}}\put(80,60){\circle*{5}}
\put(50,100){\circle*{5}}
\put(48,70){\circle*{5}}\put(54,79){\circle*{5}}\put(62,76){\circle*{5}}
\put(70,72){\circle*{5}}\put(76,68){\circle*{5}}
\put(40,30){\circle*{5}}\put(44,22){\circle*{5}}\put(80,30){\circle*{5}}
\put(64,14){\circle*{5}}
\qbezier(0,30)(0,45)(0,60)\qbezier(0,30)(20,30)(40,30)\qbezier(0,60)(20,60)(40,60)\qbezier(40,30)(40,45)(40,60)
\qbezier(80,30)(80,45)(80,60)\qbezier(44,22)(64,0)(80,30)\qbezier(48,70)(51,85)(62,76)\qbezier(76,68)(77,70)(80,60)
\qbezier(20,60)(35,80)(50,100)\qbezier(40,60)(45,80)(50,100)\qbezier(50,100)(60,86)(70,72)
{\linethickness{0.35mm} \qbezier[8](48,70)(55,73)(62,76)\qbezier[15](40,60)(55,66)(70,72)
\qbezier[25](40,30)(55,51)(70,72)\qbezier[25](44,22)(60,45)(76,68) }
\put(30,-5){(bb)}
\end{picture}
\hskip 2cm\begin{picture}(80,90)
\put(40,60){\circle*{5}}\put(80,60){\circle*{5}}
\put(55,74){\circle*{5}}\put(65,74){\circle*{5}}
\put(40,30){\circle*{5}}\put(80,30){\circle*{5}}
\put(60,15){\circle*{5}}\put(60,45){\circle*{5}}
\qbezier(0,30)(0,45)(0,60)\qbezier(0,30)(20,30)(40,30)\qbezier(0,60)(20,60)(40,60)\qbezier(40,30)(40,45)(40,60)
\qbezier(80,30)(80,45)(80,60)\qbezier(40,30)(60,0)(80,30)\qbezier(40,60)(60,90)(80,60)
{\linethickness{0.5mm}\qbezier[6](60,45)(57.5,59.5)(55,74)\qbezier[6](60,45)(62.5,59.5)(65,74)\qbezier[6](60,45)(60,30)(60,15) }
\put(30,-5){(ca)}
\end{picture}
\hskip 0.3cm\begin{picture}(60,90)
\put(40,60){\circle*{5}}\put(40,30){\circle*{5}}
\put(50,60){\circle*{5}}\put(50,30){\circle*{5}}
\put(100,60){\circle*{5}}\put(100,30){\circle*{5}}
\put(64,16){\circle*{5}}\put(74,15){\circle*{5}}\put(84,16){\circle*{5}}
\put(60,74){\circle*{5}}\put(68,75){\circle*{5}}\put(80,75){\circle*{5}}\put(88,74){\circle*{5}}
\put(74,45){\circle*{5}}
\qbezier(0,30)(0,45)(0,60)\qbezier(0,30)(20,30)(40,30)\qbezier(0,60)(20,60)(40,60)\qbezier(40,30)(40,45)(40,60)
\qbezier(50,30)(50,45)(50,60)\qbezier(100,30)(100,45)(100,60)
\qbezier(74,45)(74,30)(74,15)\qbezier(74,45)(71,60)(68,75)\qbezier(74,45)(77,60)(80,75)
\qbezier(50,60)(52,67)(60,74)\qbezier(50,30)(54,23)(64,16)
\qbezier(100,60)(98,70)(86,75)\qbezier(100,30)(95,20)(88,16)\qbezier(68,75)(74,77)(80,75)
{\linethickness{0.35mm} \qbezier[27](60,74)(62,45)(64,16)\qbezier[27](67,75)(70,45)(73,15)
\qbezier[27](81,75)(78,45)(75,15)\qbezier[27](88,74)(86,45)(84,16)  }
\put(30,-5){(cb)}
\end{picture}

\caption{The change of brick structure after the internal nodes of brick-bridge $P$ is added. In $(*a)$, each rectangle represents a $T$-brick of $G[C]$ and each rounded rectangle represents an $R$-brick of $G[C]$. The dotted edges in $(*a)$ are the edges added together with the addition of $int(P)$. Figures in $(*b)$ show the brick-decompositions of $G[C\cup int(P)]$. The dotted edges in $(*b)$ are the virtual edges which are used to create marked components.}\label{fig14-6-18-2}
\end{center}
\end{figure*}
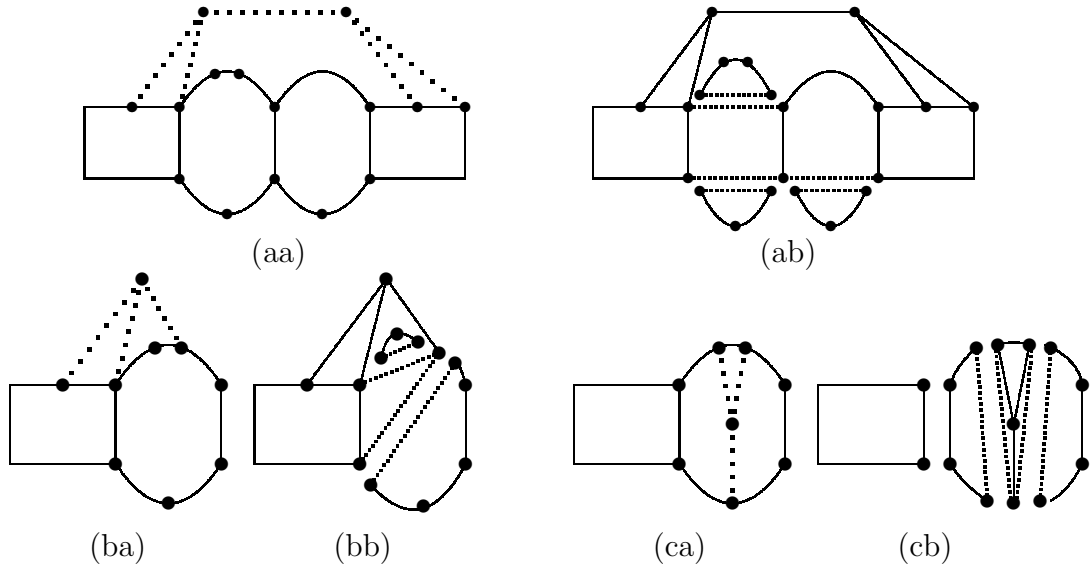

However, this rough description is not accurate. In fact, since we are considering {\em node-induced} subgraph, when the internal nodes of some brick-bridge is added, we are actually adding a lot of brick-bridges. Consider  Fig.\ref{fig14-6-9-3} for an example, $P=u_0u_1u_2u_3$ is a brick-bridge. Adding $int(P)=\{u_1,u_2\}$, another block-bridge $u_0u_1u_4$ is added as a byproduct. It should also be noted that we regard brick-bridge $u_0u_1u_4$ to stride over $B_1$ and $B_3$, not $B_1$ and $B_4$, since brick $B_4$ is not affected by adding brick-bridge $u_0u_1u_4$. The following observation is a more accurate description on the change of brick structure.

\begin{figure*}[!htbp]
\begin{center}
\begin{picture}(120,100)

\put(0,30){\circle*{5}}
\put(40,10){\circle*{5}}\put(40,30){\circle*{5}}\put(80,10){\circle*{5}}\put(80,30){\circle*{5}}
\put(50,100){\circle*{5}}\put(50,70){\circle*{5}}\put(9,66){\circle*{5}}

\qbezier(0,10)(-10,20)(0,30)
\qbezier(40,10)(40,20)(40,30)
\qbezier(80,10)(80,20)(80,30)
\qbezier(120,10)(130,20)(120,30)
\qbezier(0,10)(20,0)(40,10)
\qbezier(40,10)(60,0)(80,10)
\qbezier(80,10)(100,0)(120,10)
\qbezier(0,30)(20,40)(40,30)
\qbezier(40,30)(60,40)(80,30)
\qbezier(80,30)(100,40)(120,30)
\qbezier(40,30)(20,80)(50,80)
\qbezier(40,10)(80,70)(50,80)

{\linethickness{0.35mm}\qbezier[23](50,100)(0,77)(0,30)
\qbezier[19](50,100)(90,77)(80,30)
\qbezier[9](50,100)(50,85)(50,70) }

\put(46,61){$u_0$}\put(55,100){$u_1$}\put(-4,70){$u_2$}\put(-14,33){$u_3$}\put(83,25){$u_4$}
\put(42,45){$B_1$}\put(15,15){$B_2$}\put(60,15){$B_3$}\put(100,15){$B_4$}
\end{picture}
\vskip 0.3cm \caption{Adding the internal nodes of a brick-bridge may result in the addition of more than one brick-bridges.}\label{fig14-6-9-3}
\end{center}
\end{figure*}
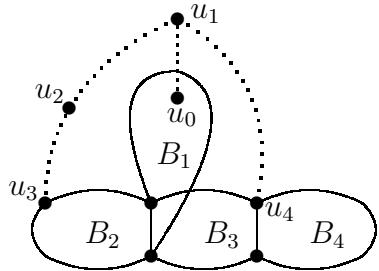


\begin{observation}\label{obs1}
Suppose $m\geq 3$, $G$ is a $3$-connected graph, and $C$ is a $(2,m)$-CDS of $G$. Let $X$ be a node set of $V(G)\setminus C$ such that $G[X]$ is connected. Denote $\mathcal B_X=\{(B,B')\colon$ $B,B'$ are bricks in $G[C]$ and there exists a brick-bridge of $G[C]$ whose internal nodes are in $X$ which strides over $B$ and $B'\}$. Let $\mathcal Q_X=\bigcup_{(B,B')\in\mathcal B_X}\mathcal Q_{BB'}$.

$(\romannumeral1)$ Those $T$-bricks of $\mathcal Q_X$ are merged into a bigger new $T$-brick of $G[C\cup X]$, and $X$ is contained in this new $T$-brick.

$(\romannumeral2)$ Each $R$-brick of $\mathcal Q_X$ is divided by the new $T$-brick into some smaller $R$-bricks of $G[C\cup X]$.

$(\romannumeral3)$ If an $R$-brick $R$ of $G[C]$ is divided into $s$ smaller $R$-bricks of $G[C\cup X]$, say $R_1,\ldots,R_s$, then
$$
\sum_{i=1}^s|R_i|\leq\left\{\begin{array}{ll}|R|+s, & \mbox{if}\ s\geq 3,\\|R|+1, & \mbox{if}\ s=2,\\ |R|-1, & \mbox{if}\ s=1.\end{array}\right.
$$
where $|R|$ is the number of nodes in $R$.

$(\romannumeral4)$ For every pair of bricks $(B,B')\in\mathcal B_X$, all those good 2-separators on the unique path $Q_{BB'}$ in the brick tree of $G[C]$ are contained in the new $T$-brick of $G[C\cup X]$.
\end{observation}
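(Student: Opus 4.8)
My plan is to separate the four claims by the structural fact each needs, proving the connectivity claims (i) and (iv) first and then reading off (ii) and (iii) from the resulting picture.

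\emph{Setup.} First I would invoke Corollary \ref{cor14-5-9-2} with $k=2$: since $G$ is $3$-connected (hence $2$-connected), $C$ is a $(2,m)$-CDS, and $m\geq 3\geq 2$, the set $C\cup X$ is again a $(2,m)$-CDS and, crucially, no node of $X$ lies in any $2$-node cut of $G[C\cup X]$. Consequently every good $2$-separator occurring in the brick decomposition of $G[C\cup X]$ is a pair of nodes of $C$, so I can track the new decomposition entirely through the old nodes. I would also record the key local fact: each brick-bridge with internal nodes in $X$ supplies, for its two end-nodes, a third internally disjoint path (through $X$) beyond the two paths already present inside $G[C]$.

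\emph{The merged $T$-brick (claims (i) and (iv)).} Using that $G[X]$ is connected, I would show that the union of the brick-tree paths $Q_{BB'}$ over $(B,B')\in\mathcal B_X$ is a connected subtree of $B(H)$: any two of the relevant brick-bridges have their interiors joined by a path inside $X$, which forces the corresponding tree-paths to share a brick or separator. Next I would prove that $X$, together with all $T$-bricks in $\mathcal Q_X$ and all good $2$-separators lying on the paths $Q_{BB'}$, induces a $3$-connected subgraph $T^\ast$ of $G[C\cup X]$. The mechanism is local: along each $Q_{BB'}$ consecutive bricks are glued at good $2$-separators, which by definition already have local connectivity $\geq 3$ inside their own bricks, and the brick-bridge contributes the extra disjoint path that fuses these pieces across the separator, so no $2$-cut survives inside $T^\ast$. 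I would build $T^\ast$ up brick by brick along the subtree, applying Lemma \ref{lem14-6-9-1} to attach nodes already joined to $\geq 3$ placed nodes, and verify along the way that every separator on a path $Q_{BB'}$ is forced into $T^\ast$, which is exactly (iv). Maximality --- that $T^\ast$ does not spill into bricks outside $\mathcal Q_X$ --- follows because any brick off all the paths $Q_{BB'}$ is separated from $T^\ast$ by an original good $2$-separator that no brick-bridge strides across, hence still a $2$-separator of $G[C\cup X]$ by the Setup.

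\emph{R-brick division and counting (claims (ii) and (iii)).} Fix an $R$-brick $R$ (a cycle) in $\mathcal Q_X$ and set $M:=V(R)\cap V(T^\ast)$, the nodes of $R$ absorbed into $T^\ast$. Because $R$ connects to the rest of $G[C]$ only through its good $2$-separators, which now lie in $T^\ast$, deleting $M$ breaks the cycle into arcs, and each arc together with its two flanking nodes of $M$ (joined by a virtual edge) is a smaller $R$-brick attached to $T^\ast$; this gives (ii). For the count, write $M$ as $s$ cyclic blocks of sizes $\beta_1,\dots,\beta_s$ separating the $s$ arcs, so $b:=|M|=\sum_j\beta_j$ and each smaller brick has size (arc length)$+2$; summing gives $\sum_{i=1}^s|R_i|=2s+(|R|-b)$. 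Since every block is nonempty, $b\geq s$, which already yields $\sum|R_i|\leq |R|+s$ for $s\geq 3$. The delicate cases are $s\leq 2$, where I must upgrade to $b\geq 3$: I would argue that a block of size $\leq 2$ contributes no node strictly interior to $T^\ast$, so the two boundary nodes of such a block cannot be a genuine good separator producing a proper smaller brick, forcing any real division with $s\in\{1,2\}$ to contain a block with an interior node; this gives $b\geq 3$ and the sharper bounds $|R|+1$ and $|R|-1$.

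\emph{Main obstacle.} The hard part is this small-$s$ refinement in (iii): the generic estimate $b\geq s$ is too weak exactly when $s\in\{1,2\}$, and pinning down why $T^\ast$ must absorb a third node there requires the node-induced nature of the construction --- a single brick-bridge forces several others as byproducts (as in Fig.\ref{fig14-6-9-3}), and these extra bridges are what guarantee the additional absorbed node. I expect the bulk of the work, and the only place where $m\geq 3$ together with the $3$-connectivity of $T^\ast$ is really used, to be in showing that every block flanking an arc that genuinely splits off must reach into the interior of $T^\ast$. Establishing the $3$-connectivity and maximality of $T^\ast$ is the second most technical point, but it reduces cleanly to the good-separator / brick-bridge disjoint-path bookkeeping above.
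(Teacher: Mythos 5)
The paper never actually proves this statement: it is asserted as an Observation, supported only by the discussion around Fig.~\ref{fig14-6-18-2} and Fig.~\ref{fig14-6-9-3}, so your proposal must be judged on its own. Much of your skeleton is sound and matches the machinery the paper deploys later: invoking Corollary~\ref{cor14-5-9-2} with $k=2$ so that every $2$-separator of $G[C\cup X]$ lies in $C$, and the counting identity $\sum_{i=1}^s|R_i|=|R|+2s-|V(B)\cap V(R)|$, which is precisely what the paper derives in \eqref{eq14-8-31-1} and \eqref{eq14-8-31-2} inside the proof of Lemma~\ref{lemma7}; you also correctly isolate that the $s\le 2$ cases of $(\romannumeral3)$ reduce to $|V(B)\cap V(R)|\ge 3$ and that the byproduct-bridge phenomenon of Fig.~\ref{fig14-6-9-3} (via $m\ge 3$) is the mechanism behind it. However, there is a genuine structural flaw in your construction of $T^\ast$. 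The new $T$-brick contains \emph{all} of $N(X)\cap C$ (since no node of $X$ lies in a $2$-separator, every neighbor of a node of $X$ sits in that node's brick), and these neighbors may be interior nodes of $R$-bricks; your $T^\ast$, defined as $X$ plus the $T$-bricks of $\mathcal Q_X$ plus the good separators on the paths, omits them. In the case where the only brick-bridge strides over nonadjacent nodes of a single $R$-brick (panels $(ca)$/$(cb)$ of Fig.~\ref{fig14-6-18-2}), $\mathcal Q_X$ contains no $T$-brick and no separator, so your $T^\ast$ would consist of $X$ alone, $M=V(R)\cap V(T^\ast)$ would be empty, and both the division argument for $(\romannumeral2)$ and the count in $(\romannumeral3)$ collapse. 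Relatedly, since you place the good separators into $T^\ast$ by fiat, claim $(\romannumeral4)$ becomes circular unless you prove that $T^\ast$ really is the brick of $G[C\cup X]$ containing $X$, which is exactly the content at stake.

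The $3$-connectivity argument also does not go through as described. Bricks are $3$-connected only as \emph{marked} components, i.e., after virtual edges are added; the induced subgraph $G[T^\ast]$ need not be $3$-connected, so ``induces a $3$-connected subgraph'' is false as stated. Moreover, Lemma~\ref{lem14-6-9-1} attaches one node at a time to a $k$-connected base, but before a bridge's interior is placed, two $T$-bricks meeting only in a $2$-separator do not form a $3$-connected (indeed not even the right) base, and an interior bridge node has only two neighbors among the placed nodes unless its other $C$-neighbors happen to be placed already; so the brick-by-brick build-up needs a different gluing argument. The cleaner route, which your own Setup nearly provides, is top-down: no node of $X$ lies in any $2$-separator of $G[C\cup X]$ and $G[X]$ is connected, so $X\cup(N(X)\cap C)$ lies in a single brick; that brick cannot be a cycle because each $x\in X$ has all of its $\ge m\ge 3$ neighbors inside it; one then identifies which old bricks and separators this $T$-brick absorbs. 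Finally, the crux you flag---$|V(B)\cap V(R)|\ge 3$ for $s=2$---is only gestured at, and your stated reason (``a block of size $\le 2$ contributes no node strictly interior to $T^\ast$\,'') proves nothing as written; the argument that works is that every internal node of the dividing bridge has a third neighbor $w\in N(X)\cap C\subseteq V(B)$, and if $w\notin V(R)$ then $(\romannumeral4)$ together with the fact that a separator pair attached to an $R$-brick consists of two \emph{consecutive} cycle nodes (as used in the proof of Lemma~\ref{lemma5}) supplies a third node of $R$ in $B$, distinct from the nonadjacent pair $\{u,v\}$; the $s=1$ case is then immediate, since a contiguous block of size $2$ leaves $R$ unchanged rather than divided. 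So the plan has the right ingredients but, as it stands, the central construction and the small-$s$ refinement both contain real gaps.
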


For a 2-connected graph $H$, denote by $\mathcal B(H)$ the set of bricks of $H$, $\mathcal R(H)$ the set of $R$-bricks of $H$, and $\mathcal T(H)$ the set of $T$-bricks of $H$. Define
$$
f(H)=|\mathcal T(H)|+q(H),
$$
where $q(H)=\Sigma_{R\in \mathcal R(H)}(2|R|-5)$.

\begin{lemma}\label{lemma11}
Suppose $m\geq 3$, $G$ is a $3$-connected graph, and $C$ is a $(2,m)$-CDS of $G$ such that $G[C]$ is not 3-connected. Let $P$ be a brick-bridge of $G[C]$ and let $X=int(P)$. Then $f(C)\geq f(C\cup X)+1$. If furthermore, $|\mathcal Q_X|\geq 2$ and there exists an $R$-brick $R_a\in\mathcal Q_X$ such that $|R_a|\geq4$, then $f(C)\geq f(C\cup X)+2$.
\end{lemma}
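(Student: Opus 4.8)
The plan is to localize the change of $f$ to the bricks lying on $\mathcal Q_X$, since by Observation \ref{obs1} every brick of $G[C]$ outside $\mathcal Q_X$ survives unchanged in $G[C\cup X]$ and hence contributes equally to $f(C)$ and $f(C\cup X)$. I write $t=|\mathcal T(\mathcal Q_X)|$ for the number of $T$-bricks on $\mathcal Q_X$ and let $R^{(1)},\dots,R^{(r)}$ be the $R$-bricks on $\mathcal Q_X$, so that $|\mathcal Q_X|=t+r$. By Observation \ref{obs1}$(\romannumeral1)$ these $t$ $T$-bricks together with $X$ collapse into a \emph{single} new $T$-brick of $G[C\cup X]$, and by $(\romannumeral2)$ each $R^{(j)}$ is cut by this new $T$-brick into some number $s_j$ of smaller $R$-bricks $R^{(j)}_1,\dots,R^{(j)}_{s_j}$. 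Comparing the two values of $f$ term by term, I obtain
\[
f(C)-f(C\cup X)=(t-1)+\sum_{j=1}^{r}\delta_j,\qquad
\delta_j:=(2|R^{(j)}|-5)-\sum_{i=1}^{s_j}(2|R^{(j)}_i|-5),
\]
where the $-1$ records that $t$ old $T$-bricks are replaced by one new one, and $\delta_j$ is the drop in the $R$-brick contribution of $R^{(j)}$.

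The heart of the argument is a per-brick lower bound on $\delta_j$. Plugging the three size estimates of Observation \ref{obs1}$(\romannumeral3)$ into the definition of $\delta_j$, a short calculation gives $\delta_j\geq 3s_j-5\geq 4$ when $s_j\geq 3$, $\delta_j\geq 3$ when $s_j=2$, and $\delta_j\geq 2$ when $s_j=1$; moreover, if $R^{(j)}$ is completely swallowed by the new $T$-brick ($s_j=0$) then $\delta_j=2|R^{(j)}|-5$. Since every $R$-brick is a cycle on at least three nodes, these bounds yield $\delta_j\geq 2$ whenever $|R^{(j)}|\geq 4$. The only way $\delta_j$ can drop below $2$ is when $R^{(j)}$ is a triangle; the cases $s_j=1,2$ are then arithmetically impossible (the estimates of Observation \ref{obs1}$(\romannumeral3)$ would force a resulting cycle to have fewer than three nodes), so such a triangle is fully absorbed and contributes exactly $\delta_j=2\cdot 3-5=1$. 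Recording this, I have $\delta_j\geq 2$ for $|R^{(j)}|\geq 4$ and $\delta_j\geq 1$ for $|R^{(j)}|=3$.

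Let $r_{\geq 4}$ be the number of $R$-bricks on $\mathcal Q_X$ with at least four nodes. Summing the per-brick bounds gives $\sum_j\delta_j\geq r+r_{\geq 4}$, hence
\[
f(C)-f(C\cup X)\ \geq\ (t-1)+r+r_{\geq 4}\ =\ (|\mathcal Q_X|-1)+r_{\geq 4}.
\]
For the first assertion I must show the right-hand side is at least $1$, i.e.\ $|\mathcal Q_X|+r_{\geq 4}\geq 2$. This is clear when $|\mathcal Q_X|\geq 2$; and when $|\mathcal Q_X|=1$ the single brick cannot be a $T$-brick, for then both ends of the brick-bridge would lie in one $T$-brick, violating condition $(\romannumeral3)$ of Definition \ref{def6-1-1}, so it is an $R$-brick, and since a brick-bridge strides over two \emph{nonadjacent} nodes of it this $R$-brick has at least four nodes, giving $r_{\geq 4}=1$ and $|\mathcal Q_X|+r_{\geq 4}=2$ (the case $|\mathcal Q_X|=0$ is excluded because $X\neq\varnothing$). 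For the second assertion the hypotheses give $|\mathcal Q_X|\geq 2$ and $r_{\geq 4}\geq 1$ (the brick $R_a$), whence $(|\mathcal Q_X|-1)+r_{\geq 4}\geq (2-1)+1=2$.

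The step I expect to be the main obstacle is pinning down the per-brick bound for small $R$-bricks and confirming that the localized bookkeeping is exactly right: in particular, that a triangle $R$-brick can only be fully absorbed (so the seemingly dangerous value $\delta_j=1$ really is the worst case), and that the single-brick configuration $|\mathcal Q_X|=1$ is genuinely forced to be an $R$-brick on at least four nodes by the nonadjacency requirement in the definition of a brick-bridge. Once these degenerate configurations are shown to behave, the remaining inequalities follow immediately from the displayed estimate.
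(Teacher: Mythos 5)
Your proposal is correct and takes essentially the same route as the paper's proof: both localize the change of $f$ to the bricks of $\mathcal Q_X$, convert the size estimates of Observation \ref{obs1}$(\romannumeral3)$ into the same per-brick drops ($3s-5$ for $s\geq 3$, $3$ for $s=2$, $2$ for $s=1$, and $2|R|-5$ for an absorbed brick), account for the $T$-bricks merging via the term $|\mathcal Q_X^T|-1$, and dispose of the degenerate case $|\mathcal Q_X|=1$ by observing that the brick-bridge must then stride over nonadjacent nodes of an $R$-brick, forcing $|R|\geq 4$. The only difference is organizational: you assemble these estimates into the direct lower bound $f(C)-f(C\cup X)\geq(|\mathcal Q_X|-1)+r_{\geq 4}$ and read off both conclusions, whereas the paper assumes the conclusion fails and derives a contradiction from the equivalent inequality.
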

\begin{proof}
For each $R\in\mathcal R(G[C])$, we use $\mathcal R^{div}_{C,X}(R)$ to denote the set of smaller $R$-bricks arising from the division of $R$ after $X$ is added into $C$, and denote $s(R)=|\mathcal R^{div}_{C,X}(R)|$. For an integer $j\geq 0$, denote by $\mathcal R_j(C)$ (resp. $\mathcal R_{\geq j}(C)$) the set of $R$-bricks of $G[C]$ with $s(R)=j$ (resp. $s(R)\geq j$). Notice that every $R\in\mathcal R_0(C)$ is completely merged into the new $T$-brick and thus diminished from $\mathcal R(G[C\cup X])$. Let $\mathcal R^{rec}_X(C)$ be the set of $R$-bricks of $G[C]$ which remain the same in $G[C\cup X]$. For simplicity of notation, we use $\mathcal T(C)$ to denote $\mathcal T(G[C])$ etc. For any $R\in\mathcal R(G[C])$, observe that $|R|\geq 3$. Combining this with $(\romannumeral3)$ of Observation \ref{obs1}, we have
\begin{equation}\label{eq14-6-18-4}
\sum_{{R'\in \mathcal R^{div}_{C,X}(R)}}(2|R'|-5)\leq \left\{\begin{array}{ll}2|R|-3s(R), & \mbox{if}\ s(R)\geq 3,\\2|R|-8, & \mbox{if}\ s(R)=2,\\2|R|-7, & \mbox{if}\ s(R)=1,\\2|R|-6, & \mbox{if}\ s(R)=0.\end{array}\right.
\end{equation}
Then,
\begin{align*}
q(C\cup X) & =\sum_{R\in\mathcal R^{rec}_X(C)}(2|R|-5)+\sum_{R\in\mathcal R_{\geq 1}(C)}\sum_{R'\in\mathcal R^{div}_{C,X}(R)}(2|R'|-5)\\
& \leq \sum_{R\in\mathcal R^{rec}_X(C)}(2|R|-5)+\sum_{R\in\mathcal R_1(C)}(2|R|-7)+\!\!\!\! \sum_{R\in\mathcal R_2(C)}\!\!\!\!(2|R|-8)+\!\!\!\!\sum_{R\in\mathcal R_{\geq 3}(C)}\!\!\!\!(2|R|-3s(R))\nonumber\\
& =\sum_{R\in\mathcal R(C)\setminus\mathcal R_0(C)}(2|R|-5)-2|\mathcal R_1(C)|-3|\mathcal R_2(C)|-\sum_{R\in\mathcal R_{\geq 3}(C)}(3s(R)-5)\\
& \leq \sum_{R\in\mathcal R(C)}(2|R|-5)-\sum_{R\in\mathcal R_0(C)}(2|R|-5)-2|\mathcal R_{\geq1}(C)|.
\end{align*}
Hence
\begin{align}\label{eq14-6-23-11}
\Delta_Xq(C)=q(C\cup X)-q(C)\leq -\sum_{R\in\mathcal R_0(C)}(2|R|-5)-2|\mathcal R_{\geq1}(C)|.
\end{align}
By Observation \ref{obs1},
\begin{equation}\label{eq14-6-21-1}
\Delta_X|\mathcal T(C)|=|\mathcal T(C\cup X)|-|\mathcal T(C)|=1-|\mathcal Q_X^T(C)|,
\end{equation}
where $\mathcal Q_X^T(C)=\mathcal Q_X\cap \mathcal T(C)$ is the set of $T$-bricks of $G[C]$
which are merged into the new $T$-brick of $G[C\cup X]$. So,
\begin{align*}
\Delta_Xf(C)=\Delta_X|\mathcal T(C)|+\Delta_Xq(C)\leq 1-|\mathcal Q_X^T(C)|-\sum_{R\in\mathcal R_0(C)}(2|R|-5)-2|\mathcal R_{\geq 1}(C)|.
\end{align*}

If the lemma is not true, then $\Delta_Xf(C)=f(C\cup X)-f(C)\geq 0$, and thus
\begin{equation}\label{eq14-6-21-3}
|\mathcal Q_X^T(C)|+\sum_{R\in\mathcal R_0(C)}(2|R|-5)+2|\mathcal R_{\geq 1}(C)|\leq 1.
\end{equation}
It follows that $\mathcal R_{\geq 1}(C)=\emptyset$, $|\mathcal Q_X^T(C)|\leq 1$, and $|\mathcal R_0(C)|\leq 1$ (since every $R$-brick $R$ has at least three nodes, $2|R|-5\geq 1$). If $|\mathcal Q_X^T(C)|=1$, then by the definition of brick-bridge (the two ends of a brick-bridge do not belong to a same $T$-brick), we have $|\mathcal Q_X|\geq 2$, and thus $\mathcal Q_X$ has at least one $R$-brick. Since $\mathcal R_{\geq 1}(C)=\emptyset$, this $R$-brick belongs to $\mathcal R_0(C)$. But then $\sum_{R\in\mathcal R_0(C)}(2|R|-5)\geq 1$, and the left side of \eqref{eq14-6-21-3} is at least 2. So, all bricks of $\mathcal Q_X$ are $R$-bricks, and similarly to the above, they belong to $\mathcal R_0(C)$. Since $|\mathcal R_0(C)|\leq 1$, this is possible only when the brick-bridge $P$ strides over non-adjacent nodes of an $R$-brick $R$. It follows that $|R|\geq 4$, and thus $\sum_{R\in\mathcal R_0(C)}(2|R|-5)\geq 3$, again a contradiction. So, $f(C)\geq f(C\cup X)+1$. The first part of the lemma is proved.

Suppose the conditions for the second part of the lemma are satisfied. If $f(C)<f(C\cup X)+2$, then inequality \eqref{eq14-6-21-3} becomes
\begin{equation}\label{eq15-1-8-1}
|\mathcal Q_X^T(C)|+\sum_{R\in\mathcal R_0(C)}(2|R|-5)+2|\mathcal R_{\geq 1}(C)|\leq 2.
\end{equation}
We can not have $R_a\in\mathcal R_0(C)$, since otherwise the second term is at least $3$. Hence inequality \eqref{eq15-1-8-1} is possible only when $R_a\in\mathcal R_{\geq 1}(C)$, $|\mathcal R_{\geq 1}(C)|=1$, $\mathcal R_0(C)=\emptyset$, and $|\mathcal Q^T_X|=0$. But then, $|\mathcal Q_X|=1$, contradicting the assumption that $|\mathcal Q_X|\geq 2$. The second part of the lemma is proved.
\end{proof}

Lemma \ref{lemma11} says that as long as $G[C]$ is not 3-connected, the function $f$ can always be strictly decreased. Furthermore, under the ``if'' condition of Lemma \ref{lemma11}, the amount for the decrease can be at least 2.

\begin{lemma}\label{lem14-6-19-6}
Let $C$ be a node subset of $G$ such that $G[C]$ is 2-connected. Then, $f(C)=1$ if and only if either $G[C]$ is 3-connected or $G[C]$ is a triangle.
\end{lemma}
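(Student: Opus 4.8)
The plan is to work directly with the brick decomposition of $G[C]$ described in Section~\ref{sec3}, exploiting the fact that every $R$-brick contributes at least $1$ to $f$. The key numerical observation is that for any $R$-brick $R$ we have $|R|\geq 3$ (it is a cycle), whence $2|R|-5\geq 1$, with equality exactly when $|R|=3$, i.e.\ when $R$ is a triangle. Since $T$-bricks contribute $1$ each, this gives the lower bound
\[
f(C)=|\mathcal T(C)|+\sum_{R\in\mathcal R(C)}(2|R|-5)\geq |\mathcal T(C)|+|\mathcal R(C)|,
\]
so $f(C)$ is bounded below by the total number of bricks in the decomposition of $G[C]$. Because $G[C]$ is $2$-connected it is nonempty and its decomposition contains at least one brick, so $f(C)\geq 1$ in every case.

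For the ``if'' direction I would check the two cases separately. If $G[C]$ is $3$-connected, then it has no $2$-separator at all (removing two nodes leaves it connected), so the decomposition cannot split it and $G[C]$ is a single $T$-brick; hence $|\mathcal T(C)|=1$, $\mathcal R(C)=\emptyset$, and $f(C)=1$. If $G[C]$ is a triangle, then likewise it has no $2$-separator and is a single $R$-brick with $|R|=3$, so $f(C)=2\cdot 3-5=1$.

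For the ``only if'' direction, suppose $f(C)=1$. By the displayed lower bound the decomposition has exactly one brick $B$, and moreover $G[C]=B$, since a decomposition with more than one brick would require at least one good $2$-separator to have been used. If $B$ is a $T$-brick, then by definition it is $3$-connected, so $G[C]$ is $3$-connected. If $B$ is an $R$-brick, then $f(C)=2|B|-5=1$ forces $|B|=3$, so $G[C]$ is a triangle. Since by Lemma~\ref{lemma2} a brick is either a $T$-brick or an $R$-brick, these are the only possibilities, which establishes the equivalence.

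The step I expect to require the most care is the claim that a single brick in the decomposition really means $G[C]$ equals that brick — that is, that $f(C)=1$ genuinely forbids any nontrivial splitting — together with the two small connectivity facts that both a $3$-connected graph and a triangle admit no $2$-separator and are therefore not decomposed further. These are consequences of the decomposition setup and of Lemma~\ref{lemma2}, but they are the places where the argument touches the structural definitions rather than pure arithmetic, so I would be explicit about them.
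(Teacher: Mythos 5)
Your proposal is correct and follows essentially the same route as the paper's proof: the paper rewrites $f(C)=|\mathcal B(C)|+\sum_{R\in\mathcal R(C)}(2|R|-6)$ and notes each summand is nonnegative with $|\mathcal B(C)|\geq 1$, which is the same arithmetic as your bound $f(C)\geq|\mathcal T(C)|+|\mathcal R(C)|$ via $2|R|-5\geq 1$, and both arguments then conclude that $f(C)=1$ forces a single brick that is either a $T$-brick (so $G[C]$ is $3$-connected) or an $R$-brick on three nodes (a triangle). Your added explicitness about why a single brick means $G[C]$ itself is that brick is a sound elaboration of a step the paper leaves implicit.
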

\begin{proof}
Notice that $f$ can also be written as
$$
f(C)=|\mathcal B(C)|+\sum_{R\in\mathcal R(C)}(2|R|-6).
$$
Since every $R$-brick has $|R|\geq 3$ and $|\mathcal B(C)|\geq 1$, we see that $f(C)=1$ if and only if $|\mathcal B(C)|=1$ and $\sum_{R\in\mathcal R(C)}(2|R|-6)=0$. Notice that $\sum_{R\in\mathcal R(C)}(2|R|-6)=0$ if and only if either $\mathcal R(C)=\emptyset$ or every $R\in\mathcal R(C)$ has $|R|=3$. In the first case, the unique brick of $G[C]$ is a $T$-brick, and thus $G[C]$ is 3-connected. In the second case, the unique brick of $G[C]$ is a cycle on three nodes, and thus a triangle.
\end{proof}

\begin{lemma}\label{lemma5}
Suppose $m\geq 3$, graph $G$ is 3-connected, and $C$ is a $(2,m)$-CDS of $G$ with $f(C)>1$. Then, there exists a brick-bridge $P$ of $G[C]$ with at most two internal nodes. Furthermore, if $G[C]$ is not a cycle, then for any brick $B\in \mathcal B(C)$, there exists such a brick-bridge $P$ of $G[C]$ satisfying $|\mathcal Q_{int(P)}|\geq2$ and $B\in\mathcal Q_{int(P)}$.
\end{lemma}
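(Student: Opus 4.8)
The plan is to first reduce, via Lemma~\ref{lem14-6-19-6}, to the two structural cases compatible with $f(C)>1$: since $C$ induces a $2$-connected graph and $f(C)>1$, the graph $G[C]$ is neither $3$-connected nor a triangle. I would first dispose of the \emph{first assertion} when $G[C]$ is a cycle. Here $G[C]$ is a cycle on $|C|\geq 4$ nodes (a triangle being excluded), and since $C\neq V(G)$ (otherwise $G[C]=G$ would be $3$-connected) there is a node $w\in V(G)\setminus C$. As $C$ is an $m$-fold dominating set with $m\geq 3$, $w$ has at least three neighbors on the cycle, and a cycle of length at least $4$ is triangle-free, so two of these neighbors $a,b$ are nonadjacent in $G[C]$. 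Then $awb$ is a brick-bridge with one internal node (condition~(iii) of Definition~\ref{def6-1-1} being vacuous, as a cycle has no $T$-brick). This settles the cycle case; in the remaining non-cycle case the ``furthermore'' part is strictly stronger, so it suffices to prove that.

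So assume $G[C]$ is not a cycle. Being $2$-connected but not $3$-connected, $G[C]$ has a good $2$-separator by Lemma~\ref{lemma2}, hence its brick-tree has at least two bricks and (being a connected tree) every brick is incident with at least one good $2$-separator. Fix an arbitrary brick $B$ and choose a good $2$-separator $S=\{u,v\}\subseteq B$ incident with $B$. Deleting the separator-node $S$ splits the brick-tree into subtrees; let $A$ be the non-$S$ nodes in bricks of the subtree containing $B$, and $A'$ the non-$S$ nodes in the remaining bricks. Both are nonempty, they partition $C\setminus S$, and since $S$ separates them in the induced graph $G[C]$ there is no edge of $G$ between $A$ and $A'$. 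Because $G$ is $3$-connected, $G-S$ is connected, so there is an $A$--$A'$ path; as all such connections must pass through $O:=V(G)\setminus C$, I can extract a path $Y$ with ends $a\in A$, $b\in A'$ and all internal nodes in $O$, and I choose $Y$ with the fewest internal nodes. The three brick-bridge conditions then hold: the internal nodes lie outside $G[C]$; $a,b$ are nonadjacent; and $a,b$ lie on opposite sides of $S$, so no single brick (hence no $T$-brick) contains both. Moreover $B$ is the unique brick of its subtree incident with $S$, so the brick-tree path from a brick $B_a\ni a$ to a brick $B_b\ni b$ must cross $S$ and therefore pass through $B$; thus $B\in\mathcal Q_{int(Y)}$, and since $B_a\neq B_b$ we get $|\mathcal Q_{int(Y)}|\geq 2$.

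The main obstacle is to bound the number of internal nodes of this minimal $Y$ by two, and this is exactly where $m\geq 3$ is used. Write $Y=a\,w_1\cdots w_t\,b$ with each $w_i\in O$. Minimality yields a shortcut argument: if some $w_i$ with $i\geq 2$ had a neighbor in $A$, then replacing the prefix by that neighbor gives a strictly shorter admissible path, a contradiction; symmetrically, if some $w_i$ with $i\leq t-1$ had a neighbor in $A'$, the suffix could be shortened. Hence every internal node $w_i$ with $2\leq i\leq t-1$ has all of its neighbors in $C$ inside $S$. But the $m$-fold domination with $m\geq 3$ forces $w_i$ to have at least three neighbors in $C$, while $|S|=2$; so no such index can exist, giving $t\leq 2$. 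This produces a brick-bridge with at most two internal nodes satisfying $B\in\mathcal Q_{int(Y)}$ and $|\mathcal Q_{int(Y)}|\geq 2$, which is the ``furthermore'' part; specializing $B$ to any brick also yields the first assertion in the non-cycle case.
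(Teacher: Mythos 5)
Your proof is correct and takes essentially the same route as the paper's: reduce via the characterization of $f(C)=1$ to the non-$3$-connected case, pick a (good) $2$-separator $S$ in the designated brick $B$, use $3$-connectedness of $G$ to get a connecting path, bound its internal nodes by the shortcut argument exploiting $m$-fold domination with $m\geq 3$ against $|S|=2$, and read off $B\in\mathcal Q_{int(P)}$ and $|\mathcal Q_{int(P)}|\geq 2$ from the brick-tree. Your variations are only cosmetic refinements — treating the cycle case separately with a one-internal-node bridge, defining the two sides $A,A'$ via subtrees of the brick-tree rather than components of $G[C]-S$ (which spares the paper's remark that $B-S$ is connected), and extracting the path in $G-S$ with internal nodes explicitly in $V(G)\setminus C$, which in fact tidies a small gloss in the paper's ``shortest path in $G$'' step.
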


\begin{proof}
Since $f(C)>1$, by Lemma \ref{lem14-6-19-6}, $G[C]$ is not 3-connected. Let $S$ be a 2-separator of $G[C]$, and $G_1$ be a connected component of $G[C]-S$, $G_2$ be the union of the remaining connected components of $G[C]-S$. Since $G$ is 3-connected, there is a shortest path $P=u_{0}u_{1}\ldots u_{t}$ in $G$ between $G_1$ and $G_2$. Suppose $u_{0}\in V(G_{1})$ and $u_{t}\in V(G_{2})$. Assume $t\geq 4$. Since $C$ is an $m$-fold dominating set with $m\geq 3$, we see that $u_{2}$ has at least three neighbors in $C$, one of which is $v\notin S$. If $v\in V(G_1)$, then $vu_{2}\ldots u_{t}$ is a shorter path between $G_{1}$ and $G_{2}$. If $v\in V(G_2)$, then $u_{0}u_{1}u_{2}v$ is a shorter path between $G_1$ and $G_2$. Both cases contradict the shortest assumption on $P$. So, $t\leq 3$ and thus $|int(P)|\leq 2$.

Under the assumption that $G[C]$ is not a cycle and $f(C)>1$ (which implies that $G[C]$ is not 3-connected), we see from Lemma \ref{lemma2} that any brick $B\in\mathcal B(C)$ contains a good 2-separator. Use this good 2-separator as $S$ in the above proof. If $B$ is a $T$-brick, then $B-S$ is connected. If $B$ is an $R$-brick, then $S$ consists of two consecutive nodes on cycle $B$, and thus $B-S$ is also connected. So, we can take the connected component $G_1$ of $G[C]-S$ in the above proof such that $B-S\subseteq G_1$. Then it can be seen that the brick-bridge $P$ found by the above proof satisfies $|\mathcal Q_{int(P)}|\geq2$ and $B\in\mathcal Q_{int(P)}$.
\end{proof}

\subsection{Algorithm}\label{subsec4}

Our greedy algorithm is described in Algorithm \ref{algo1} with potential function $f(C)$. Initially, it computes a $(2,m)$-CDS $C_0$ by an existing algorithm, for example the one in \cite{Zhou}. If $G[C_0]$ is a triangle, then every node in $V(G)\setminus C_0$ is adjacent with all the three nodes of $C_0$ because $m\geq 3$. Hence, adding any node into $C_0$ results in a $K_4$ (complete graph on four nodes) which is a $(3,m)$-CDS of $G$. Suppose $G[C_0]$ is not a triangle. By Lemma \ref{lemma5}, as long as $f(C)>1$, there exists a brick-bridge $P$ with at most two internal nodes. By Lemma \ref{lemma11}, adding $int(P)$ strictly decreases the $f$-value. The while-loop iterates until $f(C)$ is decreased to 1, at which time $G[C]$ is 3-connected by Lemma \ref{lem14-6-19-6}.

\begin{algorithm}[h!]
\caption{\textbf{Computation of $(3,m)$-CDS for $m\geq 3$}} \label{algo1} Input: A $3$-connected
 graph $G=(V,E)$.

Output: A $(3,m)$-CDS $C$ of $G$.
\begin{algorithmic}[1]
    \State Compute a $(2,m)$-CDS $C_{0}$ by an $\alpha$-approximation algorithm.
    \If{$G[C_0]$ is a triangle}
        \State Let $v$ be an arbitrary node in $V(G)\setminus C_0$.
        \State Output $C$ $\leftarrow$ $C_0\cup\{v\}$.
    \Else
        \State $C$ $\leftarrow$ $C_{0}$.
        \While{$f(C)>1$}
            \State Select a brick-bridge $P$ of $G[C]$ with internal node set $int(P)=X$ such that $|X|\leq 2$ and $\frac{-\bigtriangleup_{X} f(C)}{|X|}$ is maximized.
            \State $C \leftarrow C\cup \{X\}$
        \EndWhile
        \State Output $C$.
    \EndIf
\end{algorithmic}\label{algo14-6-20-1}
\end{algorithm}

\subsection{Analysis of Performance Ratio}

To analyze the performance ratio of Algorithm \ref{algo14-6-20-1}, we first present a decomposition result on an optimal solution.

\begin{lemma}\label{lemma6}
Suppose $m\geq 3$, $C$ is a $(2,m)$-CDS of $G$, and $C^{\ast}$ is a minimum $(3,m)$-CDS of $G$. Then $C^{\ast}\backslash C$ can be decomposed into the union of node sets $C^{\ast}\backslash C=Y_{1}\cup Y_{2}\cup\ldots\cup Y_{h}$ satisfying the following conditions. For $j=1,2,\ldots,h,$ denote $C_{j}^{\ast}=Y_{1}\cup\ldots\cup Y_{j}$, $C_{0}^{\ast}=\emptyset$. Suppose $l$ is the first index such that $G[C\cup C_{l}^{\ast}]$ is 3-connected.

$(\romannumeral1)$  For $1\leq j\leq l,$ node set $C_{j}^{\ast}$ is completely contained in one $T$-brick of $G[C\cup C_{j}^{\ast}]$. Denote this brick as $B^{(j)}$, set $B^{(0)}=\emptyset$.

$(\romannumeral2)$  For $1\leq j\leq l,$ $Y_{j}=int(P_{j})$, where $P_{j}$ is a brick-bridge of $G[C]$ and there exists at least one brick of $G[C]$ contained in $B^{(j-1)}$ which also belongs to $\mathcal Q_{Y_j}(C)$.

$(\romannumeral3)$  $1\leq |Y_{j}|\leq 2$ for $1\leq j\leq l$.

$(\romannumeral4)$   $|Y_{j}\cap C_{j-1}^{\ast}|\leq 1$ for $1\leq j\leq l$

$(\romannumeral5)$   $|Y_{j}|= 1$, for $j=l+1, \ldots, h$.
\end{lemma}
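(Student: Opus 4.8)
The plan is to build the decomposition by an explicit induction that grows a single distinguished $T$-brick until the current induced subgraph becomes $3$-connected, and then to add the remaining nodes one at a time. First I would record the global fact that $G[C\cup C^{\ast}]$ is $3$-connected: since a $(3,m)$-CDS exists, $G$ is $3$-connected, so applying Corollary \ref{cor14-5-9-2}$(\romannumeral1)$ with $k=3$ to the $(3,m)$-CDS $C^{\ast}$ and $U=C\setminus C^{\ast}$ shows that $C\cup C^{\ast}$ is again a $(3,m)$-CDS, whence $G[C\cup C^{\ast}]$ is $3$-connected. This is the fact that drives the whole construction.

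Next I would set up the induction. Put $C_{0}^{\ast}=\emptyset$, $B^{(0)}=\emptyset$, and suppose $Y_{1},\ldots,Y_{j-1}$ have been chosen so that $(\romannumeral1)$--$(\romannumeral4)$ hold below $j$, with $C_{j-1}^{\ast}$ lying inside a single $T$-brick $B^{(j-1)}$ of $G[C\cup C_{j-1}^{\ast}]$. If $G[C\cup C_{j-1}^{\ast}]$ is already $3$-connected I stop and set $l=j-1$. Otherwise its brick-tree has at least two bricks, so the $T$-brick $B^{(j-1)}$ is incident to some good $2$-separator $S$ shared with a neighbouring brick (for $j=1$ I instead take any good $2$-separator of $G[C]$ and treat the second clause of $(\romannumeral2)$ as a vacuous base case). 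The key move is that $S$ is \emph{not} a separator of the $3$-connected graph $G[C\cup C^{\ast}]$, so working inside $G[C\cup C^{\ast}]$ there is a path joining the two sides of $G[C\cup C_{j-1}^{\ast}]-S$ whose internal nodes avoid $S$ and therefore lie in $C^{\ast}\setminus(C\cup C_{j-1}^{\ast})$, i.e. among the not-yet-used nodes of $C^{\ast}$. Taking such a path shortest and running the same shortest-path argument as in the proof of Lemma \ref{lemma5} — every internal node is a dominatee of $C$, hence has at least $m\ge 3$ neighbours in $C$, one of which yields a shortcut — I get a path with at most two internal nodes. After adjusting its ends so that they land on nodes of $C$ (reusing at most one already-added node of $C_{j-1}^{\ast}$ to reach into $B^{(j-1)}$ when the near end is not itself in $C$), it becomes a brick-bridge $P_{j}$ of $G[C]$ striding over a brick of $B^{(j-1)}$, and I set $Y_{j}=int(P_{j})$.

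With $P_{j}$ in hand the remaining conditions fall out. Condition $(\romannumeral3)$, $1\le|Y_{j}|\le 2$, is the length bound just obtained; condition $(\romannumeral4)$, $|Y_{j}\cap C_{j-1}^{\ast}|\le 1$, is the ``reuse at most one node'' clause; and condition $(\romannumeral2)$ holds because one foot of $P_{j}$ sits in a brick of $G[C]$ contained in $B^{(j-1)}$, so that brick lies in $\mathcal Q_{Y_{j}}(C)$. For $(\romannumeral1)$ I invoke Observation \ref{obs1}$(\romannumeral1)$: adding $Y_{j}$ merges all $T$-bricks of $\mathcal Q_{Y_{j}}$, in particular $B^{(j-1)}$, into one new $T$-brick $B^{(j)}\supseteq Y_{j}$, and since $C_{j-1}^{\ast}\subseteq B^{(j-1)}\subseteq B^{(j)}$ the whole of $C_{j}^{\ast}$ lies in $B^{(j)}$. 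Because $P_{j}$ strides across $S$ into a brick on the far side, the node set of $B^{(j)}$ strictly contains that of $B^{(j-1)}$; being bounded by the finite set $C\cup C^{\ast}$, the process terminates at an index $l$ where no good $2$-separator remains, so $G[C\cup C_{l}^{\ast}]$ is the single $T$-brick $B^{(l)}$, i.e. is $3$-connected (Lemma \ref{lemma2}). Finally, for $j>l$ I add the still-unused nodes of $C^{\ast}\setminus C$ singly; each lies in $V(G)\setminus C$ and so has at least $m\ge 3$ neighbours in $C\subseteq C\cup C_{j-1}^{\ast}$, whence Lemma \ref{lem14-6-9-1} preserves $3$-connectedness and gives $(\romannumeral5)$. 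This exhausts $C^{\ast}\setminus C$, so $\bigcup_{j}Y_{j}=C^{\ast}\setminus C$.

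The step I expect to be the main obstacle is the existence-and-smallness of $P_{j}$ in the inductive step: simultaneously guaranteeing that the bridging path runs through unused nodes of $C^{\ast}$, that it attaches to the current brick $B^{(j-1)}$ (condition $(\romannumeral2)$), that its endpoints can be pushed into $C$ so that it is a genuine brick-bridge of $G[C]$ rather than of $G[C\cup C_{j-1}^{\ast}]$, and that the resulting $Y_{j}$ overlaps the already-built set in at most one node. The shortest-path/$m\ge 3$ argument controls the length cleanly, but reconciling ``brick-bridge of $G[C]$'' with ``path found in $G[C\cup C^{\ast}]$'', and verifying that striding over $S$ really forces $B^{(j)}\supsetneq B^{(j-1)}$, is where the careful case analysis will concentrate.
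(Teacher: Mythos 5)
Your skeleton matches the paper's: establish that $G[C\cup C^{\ast}]$ is $3$-connected via Corollary \ref{cor14-5-9-2}, grow a single $T$-brick inductively by bridging across a $2$-separator $S$ associated with the current brick, bound the bridge by the $m\geq 3$ shortest-path shortcut as in Lemma \ref{lemma5}, and append leftover nodes singly for $(\romannumeral5)$. The genuine gap is exactly where you predicted it: you route the bridging path between the two sides of $G[C\cup C_{j-1}^{\ast}]-S$, force its internal nodes to be new, and then ``adjust'' a near end lying in $C_{j-1}^{\ast}$ by absorbing it as an internal node so both ends land in $C$. In the worst case the shortest path already has two internal nodes $u_1,u_2$ and its near end $u_0\in C_{j-1}^{\ast}$; after absorption the bridge has \emph{three} internal nodes, violating $(\romannumeral3)$. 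You can try to re-shortcut (the node $u_1$ has a $C$-neighbour $v\notin S$; if $v$ were on the far side the path was not shortest, so $v$ is on the near side and $vu_1u_2u_3$ has two internal nodes and ends in $C$), but then $v$ sits in an arbitrary component of $G[C]-S$ on the near side, and nothing any longer guarantees that some brick of $G[C]$ contained in $B^{(j-1)}$ belongs to $\mathcal Q_{Y_j}(C)$. Condition $(\romannumeral2)$ is not decorative: Claim 1 and SubClaim 2.5 in the proof of Lemma \ref{lemma7} depend on it, so the ratio analysis collapses if it is lost here.

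The paper's resolution, which is the idea missing from your patch, is to route the path with respect to $G[C]$ rather than the current graph: take a $2$-separator $S$ of $G[C\cup C_j^{\ast}]$ contained in $B^{(j)}$, observe by Corollary \ref{cor14-5-9-2}$(\romannumeral2)$ that $S\subseteq C$ and $S$ is also a $2$-separator of $G[C]$, and take a shortest path in $G[C^{\ast}\cup C]$ between $G_1$ and $G_2$, where $G_1$ is the union of the components of $G[C]-S$ meeting $\bigl(\bigcup_{B\in\mathcal B_1}B\bigr)-S$ and $\mathcal B_1\neq\emptyset$ is the set of bricks of $G[C]$ contained in $B^{(j)}$ that contain $S$. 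Both ends then lie in $C$ from the start, so no endpoint surgery is needed; internal nodes are merely required to avoid $C$ and \emph{may} reuse nodes of $C_j^{\ast}$, but at most one of them can (the internal node adjacent to the $G_2$-end cannot be in $C_j^{\ast}$, since all of $C_j^{\ast}$ lies on the $B^{(j)}$ side of $S$ in $G[C\cup C_j^{\ast}]$), which delivers $(\romannumeral3)$ and $(\romannumeral4)$ simultaneously, while the anchoring of $G_1$ to $\mathcal B_1$ delivers $(\romannumeral2)$. Two smaller slips: for $j=1$ you take ``any good $2$-separator of $G[C]$'', but $G[C]$ may be a cycle, which has no good $2$-separator at all --- an arbitrary $2$-separator suffices, as in Lemma \ref{lemma5}; and the internal nodes avoid $C\cup C_{j-1}^{\ast}$ because the path is shortest, not (as you write) because they avoid $S$.
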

\begin{proof}
By Corollary \ref{cor14-5-9-2}, $G[C^{*}\cup C]$ is $3$-connected. It should be pointed out that all the following paths are taken in $G[C^{*}\cup C]$. The $3$-connectedness of $G[C^{*}\cup C]$ guarantees the existence of such paths.

Suppose $G[C]$ is not 3-connected. Let $P_{1}$ be a shortest brick-bridge of $G[C]$ in $G[C^*\cup C]$. By Observation \ref{obs1} and Lemma \ref{lemma5}, conditions $(\romannumeral1)$ to $(\romannumeral4)$ are satisfied for $j=1$.

Suppose that we have found subsets $Y_{1},\ldots,Y_{j}$ satisfying conditions $(\romannumeral1)$ to $(\romannumeral4)$ and $[C\cup C_j^*]$ is not 3-connected. Let $S$ be a 2-separator of $G[C\cup C_j^*]$ which is contained in $B^{(j)}$. As we have noticed by Corollary \ref{cor14-5-9-2}, any 2-separator of $G[C\cup C_j^*]$ is also a 2-separator of $G[C]$. Hence, if we denote by $\mathcal B_1$ the set of bricks of $G[C]$ contained in $B^{(j)}$ which also contain $S$, then $|\mathcal B_1|\geq 1$. Let $G_1$ be the union of those connected components of $G[C]-S$ containing $(\bigcup_{B\in\mathcal B_1}B)-S$, and let $G_2$ be the union of remaining connected components of $G[C]-S$. Similarly to the proof of Lemma \ref{lemma5}, a shortest path $P$ in $G[C^*\cup C]$ between $G_1$ and $G_2$ has at most two internal nodes. Notice that $B^{(j)}\cap C-S$ is contained in $G_1$. So $P$ contains a brick-bridge $P'$ of $G[C\cup C_j^*]$, and $B^{(j)}\in\mathcal Q_{int(P')}$. It follows that $B^{(j)}$ is contained in the new $T$-brick of $G[C\cup C_j^*\cup int(P)]$. Taking $P_{j+1}=P$, by Observation \ref{obs1}, conditions $(\romannumeral1)$ to $(\romannumeral4)$ are satisfied for $j+1$.

For $j\geq l$, it suffices to take $Y_{j+1}$ to be an arbitrary node in $C^{*}\backslash(C\cup C_{j}^{*})$.
\end{proof}

In the following proofs, condition $(\romannumeral1)$ of Lemma \ref{lemma6} is very important for a guaranteed performance ratio. The idea of condition $(\romannumeral1)$ is that when $Y_1,\ldots,Y_l$ are added sequentially, we are expanding ONE $T$-brick (instead of merging bricks here and there in a messy way), any brick of $G[C]$ which has empty intersection with this $T$-brick remains the same.

\begin{lemma}\label{lemma7}
Suppose $m\geq 3$, $C$ is a $(2,m)$-CDS of $G$, and $C^{\ast}$ is a minimum $(3,m)$-CDS of $G$. Let $C^{\ast}\backslash C=Y_{1}\cup Y_{2}\cup\ldots\cup Y_{h}$ be the decomposition as in Lemma \ref{lemma6}, and let $l$ be the first index such that $G[C\cup C_{l}^{\ast}]$ is 3-connected. Then for any $j=1,\ldots, l$,
\begin{equation}\label{eq1}
-\bigtriangleup_{Y_{j}}f(C\cup C_{j-1}^{\ast})\leq -\bigtriangleup_{Y_{j}}f(C)+6.
\end{equation}
Furthermore, if every $R$-brick of $G[C]$ has length three, then for any $j=1,\ldots, l$,
\begin{equation}\label{eq15-1-4-2}
-\bigtriangleup_{Y_{j}}f(C\cup C_{j-1}^{\ast})\leq -\bigtriangleup_{Y_{j}}f(C).
\end{equation}
\end{lemma}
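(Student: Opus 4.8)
The plan is to expand both decreases in \eqref{eq1} term by term, using the description of $-\Delta_{Y_j}f$ that is implicit in Observation \ref{obs1} and in the computation inside the proof of Lemma \ref{lemma11}. For any $2$-connected $H$ and a brick-bridge of $H$ with internal node set $Y_j$, adding $Y_j$ merges the bricks of $\mathcal Q_{Y_j}(H)$ into one new $T$-brick and subdivides the $R$-bricks lying on the merge path, so that
\[
-\Delta_{Y_j}f(H)=\bigl(|\mathcal Q^{T}_{Y_j}(H)|-1\bigr)+\sum_{R}\Bigl[(2|R|-5)-\sum_{R'}(2|R'|-5)\Bigr],
\]
where $R$ runs over the divided $R$-bricks of $H$ and $R'$ over the pieces of $R$. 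I would write this out for $H=C$ and for $H=C\cup C^{\ast}_{j-1}$ and subtract, setting $A:=-\Delta_{Y_j}f(C\cup C^{\ast}_{j-1})$ and $B:=-\Delta_{Y_j}f(C)$.

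The main structural input is condition $(\romannumeral1)$ of Lemma \ref{lemma6}: $C^{\ast}_{j-1}$ sits inside one $T$-brick $B^{(j-1)}$ of $G[C\cup C^{\ast}_{j-1}]$, so every brick of $G[C]$ disjoint from $B^{(j-1)}$ survives unchanged, and $Y_j$ subdivides such an untouched brick in exactly the same way in both graphs, the subdivision being local to that brick. Hence the contributions of all untouched bricks cancel in $A-B$, and the comparison collapses to the region of $B^{(j-1)}$: on the $G[C\cup C^{\ast}_{j-1}]$ side this is the single $T$-brick $B^{(j-1)}$, contributing just $1$ to $|\mathcal Q^{T}|$, whereas on the $G[C]$ side it is the set $\mathcal P$ of constituent bricks of $B^{(j-1)}$ lying on the merge path, which is nonempty by condition $(\romannumeral2)$ of Lemma \ref{lemma6}. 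Carrying out the subtraction should yield
\[
A-B=1-|\mathcal P^{T}|-\sum_{R\in\mathcal P^{R}}g'(R)+\Sigma ,
\]
where $g'(R)\ge 1$ is the reduction from subdividing a constituent $R$-brick of $G[C]$ and $\Sigma$ collects the reductions coming from the pieces of partially absorbed $R$-bricks that still lie on the $Y_j$-path in $G[C\cup C^{\ast}_{j-1}]$.

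The heart of the argument, and the step I expect to be hardest, is bounding $\Sigma$. A single $R$-brick $R^{\mathrm{orig}}$ of $G[C]$ that straddles the boundary of $B^{(j-1)}$ is split into several smaller $R$-bricks when $C^{\ast}_{j-1}$ is added, and $Y_j$ then re-divides those smaller bricks; one must show that the total reduction this produces exceeds the single reduction $g'(R^{\mathrm{orig}})$ obtained by dividing $R^{\mathrm{orig}}$ directly in $G[C]$ by only a bounded amount. For this I would invoke the size-conservation bound $(\romannumeral3)$ of Observation \ref{obs1} (namely $\sum_i|R_i|\le|R|+s$), which caps how much potential two successive subdivisions of the same cyclic material can release, together with the constraint $|Y_j|\le 2$, which limits the number of $R$-bricks $Y_j$ can touch and the number of pieces it can create. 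Combined with $|\mathcal P^{T}|+\sum_{R\in\mathcal P^{R}}g'(R)\ge 1$ (from the nonemptiness of $\mathcal P$), this should give $\Sigma\le 5+|\mathcal P^{T}|+\sum_{R\in\mathcal P^{R}}g'(R)$, equivalently $A-B\le 6$, which is \eqref{eq1}.

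For the refined bound \eqref{eq15-1-4-2} the situation simplifies decisively. If every $R$-brick of $G[C]$ has length three, then no $R$-brick can ever be subdivided into smaller ones (a surviving piece would need at least three nodes), so every touched triangle is absorbed whole, $\mathcal P$ contains no partially absorbed brick, and $\Sigma=0$. Each absorbed triangle contributes exactly $g'=2\cdot 3-5=1$, hence $|\mathcal P^{T}|+\sum_{R\in\mathcal P^{R}}g'(R)=|\mathcal P|$, and the difference formula collapses to $A-B=1-|\mathcal P|\le 0$ by the nonemptiness of $\mathcal P$. This yields $-\Delta_{Y_j}f(C\cup C^{\ast}_{j-1})\le-\Delta_{Y_j}f(C)$, as required.
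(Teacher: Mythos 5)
Your overall architecture is the same as the paper's: split $f$ into the $T$-brick count and the $q$-term, use condition $(\romannumeral1)$ of Lemma \ref{lemma6} to cancel the contributions of bricks untouched by $B^{(j-1)}$ and localize the comparison to the constituent bricks of $B^{(j-1)}$, extract $+1$ from the $T$-brick side (this is the paper's Claim 1, via $\Delta_{Y_j}|\mathcal T(C)|-\Delta_{Y_j}|\mathcal T(C\cup C_{j-1}^*)|=|\mathcal Q^T_{Y_j}(C\cup C_{j-1}^*)|-|\mathcal Q^T_{Y_j}(C)|\leq 1$), and $+5$ from the $R$-brick side (the paper's Claim 2). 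Your treatment of the second inequality is also essentially the paper's: when all $R$-bricks are triangles, Observation \ref{obs1} $(\romannumeral3)$ forces every affected $R$-brick to be absorbed whole rather than subdivided (at every stage, since the $R$-bricks of $G[C\cup C_{j-1}^*]$ are surviving triangles of $G[C]$), and nonemptiness of the set of constituent bricks on the merge path, guaranteed by condition $(\romannumeral2)$ of Lemma \ref{lemma6}, turns the $+1$ from the $T$-count into a net nonpositive change. That part is sound.

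The genuine gap is in the first inequality, at exactly the step you flag as hardest: the bound $\Sigma\leq 5+|\mathcal P^{T}|+\sum_{R\in\mathcal P^R}g'(R)$ is asserted, not derived, and the two tools you propose cannot deliver it. First, $|Y_j|\leq 2$ does \emph{not} limit the number of $R$-bricks that $Y_j$ touches: a brick-bridge with one or two internal nodes may stride over two bricks arbitrarily far apart in the brick tree, in which case \emph{every} brick on the connecting tree-path is merged or divided (and adding $int(P)$ typically creates additional byproduct brick-bridges, as in Fig.\ref{fig14-6-9-3}), so neither the number of affected bricks nor the number of pieces is bounded by a function of $|Y_j|$. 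Second, Observation \ref{obs1} $(\romannumeral3)$ only controls the sizes of the pieces of a single division; what is needed is a comparison between dividing $R$ directly in $G[C]$ and re-dividing its fragments $\mathcal R^{div}_{C,C_{j-1}^*}(R)$ in $G[C\cup C_{j-1}^*]$, and the constant $5$ emerges from a sharper identity that your ingredients do not imply: with $B$ the new $T$-brick of $G[C\cup Y_j]$, the paper proves $\sum_{R'\in\mathcal R^{div}_{C,Y_j}(R)}(2|R'|-5)-(2|R|-5)=|E(B)\cap E(R)|-3|V(B)\cap V(R)|+5$, turns the per-brick discrepancy into the expression \eqref{equ8} for $g(R)$, and then must exclude the possibility that \emph{several} partially absorbed bricks each contribute up to $+5$. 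This exclusion is the real work: using Observation \ref{obs1} $(\romannumeral4)$, any two such bricks each share a good $2$-separator with both $B$ and $B^{(j-1)}$, forcing $t(R)=|V(B)\cap V(B^{(j-1)})\cap V(R)|\geq 2$, and after handling the exceptional configuration \eqref{eq14-9-3-4} separately (where $V(B)\cap V(R)=V(B^{(j-1)})\cap V(R)=V(R)$), one gets $g(R)\leq 0$ for every brick in that case, so that $+5$ can occur only when a single brick is affected. Without this vertex/edge bookkeeping and the $t(R)\geq 2$ argument, nothing in your proposal prevents the total excess $\Sigma$ from growing with the number of re-divided fragments, and the step to $A-B\leq 6$ is an unsupported leap.
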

\begin{proof}
The first part of the lemma is the result of the following two claims and the definition of $f$.

\vskip 0.2cm \noindent {\em Claim 1.} $\Delta_{Y_j}|\mathcal T(C)|-\Delta_{Y_j}|\mathcal T(C\cup C_{j-1}^*)|\leq 1$.
\vskip 0.2cm

In fact, by equation \eqref{eq14-6-21-1},
\begin{equation}\label{eq15-1-9-11}
\Delta_{Y_j}|\mathcal T(C)|-\Delta_{Y_j}|\mathcal T(C\cup C_{j-1}^*)|=|\mathcal Q_{Y_j}^T(C\cup C_{j-1}^*)|-|\mathcal Q_{Y_j}^T(C)|.
\end{equation}
Since $C_{j-1}^*$ is completely contained in one $T$-brick of $G[C\cup C_{j-1}^*]$ (see Lemma \ref{lemma6} $(\romannumeral1)$), we have $|\mathcal Q_{Y_j}^T(C\cup C_{j-1}^*)|-|\mathcal Q_{Y_j}^T(C)|\leq 1$. Claim 1 is proved.

\vskip 0.2cm\noindent {\em Claim 2.} $\Delta_{Y_j}q(C)-\Delta_{Y_j}q(C\cup C_{j-1}^*)\leq 5$.\vskip 0.2cm

The validity of Claim 2 is achieved by a series of sub-claims. The readers may refer to Fig.\ref{fig14-6-26-1} to help understanding the following proofs.

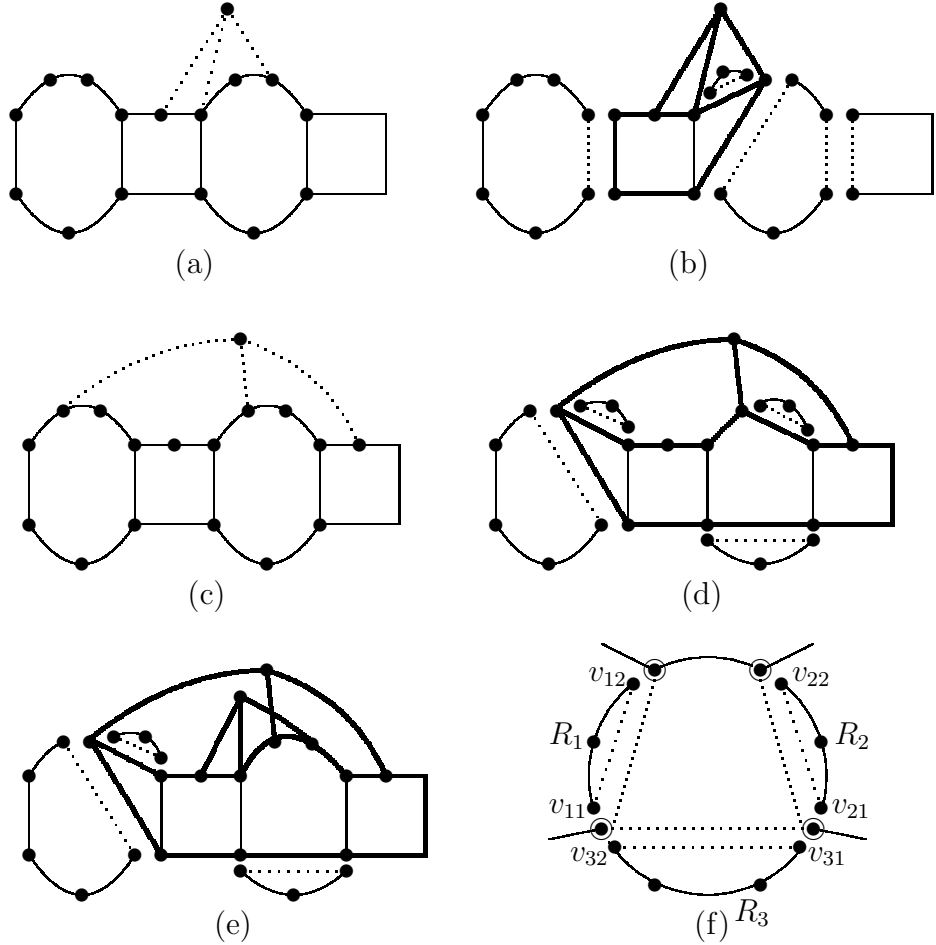
\begin{figure*}[!htbp]
\begin{center}
\begin{picture}(140,110)
\put(0,30){\circle*{5}}\put(40,30){\circle*{5}}\put(0,60){\circle*{5}}\put(40,60){\circle*{5}}
\put(70,30){\circle*{5}}\put(110,30){\circle*{5}}\put(70,60){\circle*{5}}\put(110,60){\circle*{5}}
\put(20,15){\circle*{5}}\put(13,73){\circle*{5}}\put(27,73){\circle*{5}}
\put(90,15){\circle*{5}}\put(83,73){\circle*{5}}\put(97,73){\circle*{5}}
\put(55,60){\circle*{5}}\put(80,100){\circle*{5}}
\qbezier(0,30)(0,45)(0,60)\qbezier(0,30)(20,0)(40,30)\qbezier(0,60)(20,90)(40,60)\qbezier(40,30)(40,45)(40,60)
\qbezier(40,30)(55,30)(70,30)\qbezier(40,60)(55,60)(70,60)\qbezier(70,30)(70,45)(70,60)
\qbezier(70,60)(90,90)(110,60)\qbezier(70,30)(90,0)(110,30)\qbezier(110,30)(110,45)(110,60)
\qbezier(110,30)(125,30)(140,30)\qbezier(110,60)(125,60)(140,60)\qbezier(140,30)(140,45)(140,60)
{\linethickness{0.25mm}\qbezier[13](80,100)(67.5,80)(55,60)\qbezier[10](80,100)(75,80)(70,60)\qbezier[9](80,100)(88.5,86.5)(97,73)}
\put(60,0){(a)}
\end{picture}
\hskip 1.5cm\begin{picture}(150,110)
\put(-10,30){\circle*{5}}\put(30,30){\circle*{5}}\put(40,30){\circle*{5}}
\put(-10,60){\circle*{5}}\put(30,60){\circle*{5}}\put(40,60){\circle*{5}}
\put(70,30){\circle*{5}}\put(80,30){\circle*{5}}\put(120,30){\circle*{5}}\put(70,60){\circle*{5}}\put(120,60){\circle*{5}}
\put(130,30){\circle*{5}}\put(130,60){\circle*{5}}
\put(10,15){\circle*{5}}\put(3,73){\circle*{5}}\put(17,73){\circle*{5}}
\put(76,68){\circle*{5}}\put(81,76){\circle*{5}}\put(90,75){\circle*{5}}
\put(100,15){\circle*{5}}\put(97,73){\circle*{5}}\put(107,73){\circle*{5}}
\put(55,60){\circle*{5}}\put(80,100){\circle*{5}}
\qbezier(-10,30)(-10,45)(-10,60)\qbezier(-10,30)(10,0)(30,30)\qbezier(-10,60)(10,90)(30,60)
\qbezier(40,30)(55,30)(70,30)\qbezier(40,60)(55,60)(70,60)\qbezier(70,30)(70,45)(70,60)
\qbezier(107,73)(114,70)(120,60)\qbezier(80,30)(100,0)(120,30)
\qbezier(130,30)(145,30)(160,30)\qbezier(130,60)(145,60)(160,60)\qbezier(160,30)(160,45)(160,60)
\qbezier(76,68)(81,80)(90,75)
{\linethickness{0.5mm}\qbezier(80,100)(67.5,80)(55,60)
\qbezier(80,100)(75,80)(70,60)\qbezier(80,100)(88.5,86.5)(97,73)
\qbezier(97,73)(83.5,66.5)(70,60)\qbezier(97,73)(83.5,51.5)(70,30)
\qbezier(40,30)(55,30)(70,30)\qbezier(40,60)(55,60)(70,60)
\qbezier(40,30)(40,45)(40,60) }
{\linethickness{0.25mm}\qbezier[17](107,73)(93.5,51.5)(80,30)\qbezier[6](76,68)(82,72)(90,75)
\qbezier[11](30,30)(30,45)(30,60)\qbezier[11](120,30)(120,45)(120,60)\qbezier[11](130,30)(130,45)(130,60)}
\put(60,0){(b)}
\end{picture}
\vskip 0.5cm
\begin{picture}(140,110)
\put(0,30){\circle*{5}}\put(40,30){\circle*{5}}\put(0,60){\circle*{5}}\put(40,60){\circle*{5}}
\put(70,30){\circle*{5}}\put(110,30){\circle*{5}}\put(70,60){\circle*{5}}\put(110,60){\circle*{5}}
\put(20,15){\circle*{5}}\put(13,73){\circle*{5}}\put(27,73){\circle*{5}}
\put(90,15){\circle*{5}}\put(83,73){\circle*{5}}\put(97,73){\circle*{5}}
\put(55,60){\circle*{5}}\put(125,60){\circle*{5}}\put(80,100){\circle*{5}}
\qbezier(0,30)(0,45)(0,60)\qbezier(0,30)(20,0)(40,30)\qbezier(0,60)(20,90)(40,60)\qbezier(40,30)(40,45)(40,60)
\qbezier(40,30)(55,30)(70,30)\qbezier(40,60)(55,60)(70,60)\qbezier(70,30)(70,45)(70,60)
\qbezier(70,60)(90,90)(110,60)\qbezier(70,30)(90,0)(110,30)\qbezier(110,30)(110,45)(110,60)
\qbezier(110,30)(125,30)(140,30)\qbezier(110,60)(125,60)(140,60)\qbezier(140,30)(140,45)(140,60)
{\linethickness{0.25mm}\qbezier[23](80,100)(45,100)(13,74)\qbezier[9](80,100)(81.5,86.5)(83,73)\qbezier[19](80,100)(112,90)(125,60)}
\put(60,0){(c)}
\end{picture}
\hskip 1.5cm
\begin{picture}(140,110)
\put(-10,30){\circle*{5}}\put(30,30){\circle*{5}}\put(40,30){\circle*{5}}\put(-10,60){\circle*{5}}\put(40,60){\circle*{5}}
\put(70,30){\circle*{5}}\put(110,30){\circle*{5}}\put(70,24){\circle*{5}}\put(110,24){\circle*{5}}\put(70,60){\circle*{5}}\put(110,60){\circle*{5}}
\put(22,75){\circle*{5}}\put(34,75){\circle*{5}}\put(40,67){\circle*{5}}
\put(10,15){\circle*{5}}\put(3,73){\circle*{5}}\put(13,73){\circle*{5}}
\put(90,15){\circle*{5}}\put(83,73){\circle*{5}}
\put(90,75){\circle*{5}}\put(101,75){\circle*{5}}\put(108,66){\circle*{5}}
\put(55,60){\circle*{5}}\put(125,60){\circle*{5}}\put(80,100){\circle*{5}}
\qbezier(-10,30)(-10,45)(-10,60)\qbezier(-10,30)(10,0)(30,30)\qbezier(-10,60)(-4,70)(3,73)\qbezier(40,30)(40,45)(40,60)
\qbezier(70,30)(70,45)(70,60)\qbezier(110,30)(110,45)(110,60)
\qbezier(70,24)(90,6)(110,24)\qbezier(22,75)(35,80)(40,67)\qbezier(90,76)(101,81)(108,66)
{\linethickness{0.5mm}\qbezier(80,100)(45,100)(13,74)\qbezier(80,100)(81.5,86.5)(83,73)\qbezier(80,100)(112,90)(125,60)
\qbezier(40,60)(26.5,67)(13,74)\qbezier(40,30)(26.5,52)(13,74)\qbezier(40,60)(55,60)(70,60)\qbezier(110,60)(125,60)(140,60)
\qbezier(40,30)(90,30)(140,30)\qbezier(110,60)(125,60)(140,60)\qbezier(140,30)(140,45)(140,60)
\qbezier(70,60)(76.5,66.5)(83,73)\qbezier(110,60)(96.5,66.5)(83,73) }
{\linethickness{0.25mm}\qbezier[15](30,30)(16.5,51.5)(3,73)\qbezier[11](70,24)(90,24)(110,24)
\qbezier[7](22,75)(31,71)(40,67)\qbezier[8](90,76)(99,71)(108,66)}
\put(60,0){(d)}
\end{picture}
\vskip 0.5cm
\begin{picture}(140,110)
\put(-10,30){\circle*{5}}\put(30,30){\circle*{5}}\put(40,30){\circle*{5}}\put(-10,60){\circle*{5}}\put(40,60){\circle*{5}}
\put(70,30){\circle*{5}}\put(110,30){\circle*{5}}\put(70,24){\circle*{5}}\put(110,24){\circle*{5}}\put(70,60){\circle*{5}}\put(110,60){\circle*{5}}
\put(22,75){\circle*{5}}\put(34,75){\circle*{5}}\put(40,67){\circle*{5}}
\put(10,15){\circle*{5}}\put(3,73){\circle*{5}}\put(13,73){\circle*{5}}
\put(90,15){\circle*{5}}\put(83,73){\circle*{5}}\put(97,72){\circle*{5}}
\put(55,60){\circle*{5}}\put(125,60){\circle*{5}}\put(80,100){\circle*{5}}\put(70,90){\circle*{5}}
\qbezier(-10,30)(-10,45)(-10,60)\qbezier(-10,30)(10,0)(30,30)\qbezier(-10,60)(-4,70)(3,73)\qbezier(40,30)(40,45)(40,60)
\qbezier(70,30)(70,45)(70,60)\qbezier(110,30)(110,45)(110,60)\qbezier(70,60)(85,90)(110,60)
\qbezier(70,24)(90,6)(110,24)\qbezier(22,75)(35,80)(40,67)
{\linethickness{0.5mm}\qbezier(80,100)(45,100)(13,74)\qbezier(80,100)(81.5,86.5)(83,73)\qbezier(80,100)(112,90)(125,60)
\qbezier(40,60)(26.5,67)(13,74)\qbezier(40,30)(26.5,52)(13,74)\qbezier(40,60)(55,60)(70,60)\qbezier(110,60)(125,60)(140,60)
\qbezier(40,30)(90,30)(140,30)\qbezier(110,60)(125,60)(140,60)\qbezier(140,30)(140,45)(140,60)\qbezier(70,60)(85,90)(110,60)
\qbezier(70,90)(62.5,75)(55,60)\qbezier(70,90)(70,75)(70,60)\qbezier(70,90)(83.5,84)(97,73) }
{\linethickness{0.25mm}\qbezier[15](30,30)(16.5,51.5)(3,73)\qbezier[11](70,24)(90,24)(110,24)\qbezier[7](22,75)(31,71)(40,67)}
\put(60,0){(e)}
\end{picture}
\hskip 1.5cm
\begin{picture}(120,110)
\put(40,100){\circle*{5}}\put(80,100){\circle*{5}}\put(20,40){\circle*{5}}\put(100,40){\circle*{5}}
\put(40,100){\circle{8}}\put(80,100){\circle{8}}\put(20,40){\circle{8}}\put(100,40){\circle{8}}
\put(17,48){\circle*{5}}\put(17,73){\circle*{5}}\put(32,95){\circle*{5}}
\put(103,48){\circle*{5}}\put(103,73){\circle*{5}}\put(88,95){\circle*{5}}
\put(25,33){\circle*{5}}\put(40,19){\circle*{5}}\put(95,33){\circle*{5}}\put(80,19){\circle*{5}}
\qbezier(40,100)(30,105)(20,110)\qbezier(80,100)(90,105)(100,110)\qbezier(20,40)(10,38)(0,36)\qbezier(100,40)(110,38)(120,36)

{\linethickness{0.25mm}\qbezier[19](25,33)(60,33)(95,33)\qbezier[13](17,48)(24.5,71.5)(32,95)\qbezier[13](103,48)(95.5,71.5)(88,95)
\qbezier[21](40,96)(32,68)(24,40)\qbezier[21](80,96)(88,68)(96,40)\qbezier[21](24,40)(60,40)(96,40)}

\put(0,72){$R_1$}\put(108,72){$R_2$}\put(70,5){$R_3$}
\put(0,46){$v_{11}$}\put(15,96){$v_{12}$}\put(107,46){$v_{21}$}\put(92,96){$v_{22}$}\put(98,28){$v_{31}$}\put(8,28){$v_{32}$}

\qbezier(105.0,60.0)(104.8,63.8)(104.4,67.6)
\qbezier(104.4,67.6)(103.5,71.3)(102.4,75.0)
\qbezier(102.4,75.0)(101.0,78.5)(99.3,81.9)
\qbezier(99.3,81.9)(97.3,85.2)(95.0,88.3)
\qbezier(95.0,88.3)(92.5,91.1)(89.7,93.8)
\qbezier(76.8,101.8)(73.2,103.0)(69.5,104.0)
\qbezier(69.5,104.0)(65.7,104.6)(61.9,105.0)
\qbezier(61.9,105.0)(58.1,105.0)(54.3,104.6)
\qbezier(54.3,104.6)(50.5,104.0)(46.8,103.0)
\qbezier(46.8,103.0)(43.2,101.8)(39.7,100.2)
\qbezier(33.2,96.2)(30.3,93.8)(27.5,91.1)
\qbezier(27.5,91.1)(25.0,88.3)(22.7,85.2)
\qbezier(22.7,85.2)(20.7,81.9)(19.0,78.5)
\qbezier(19.0,78.5)(17.6,75.0)(16.5,71.3)
\qbezier(16.5,71.3)(15.6,67.6)(15.2,63.8)
\qbezier(15.2,63.8)(15.0,60.0)(15.2,56.2)
\qbezier(15.2,56.2)(15.6,52.4)(16.5,48.7)
\qbezier(22.7,34.8)(25.0,31.7)(27.5,28.9)
\qbezier(27.5,28.9)(30.3,26.2)(33.2,23.8)
\qbezier(33.2,23.8)(36.4,21.7)(39.7,19.8)
\qbezier(39.7,19.8)(43.2,18.2)(46.8,17.0)
\qbezier(46.8,17.0)(50.5,16.0)(54.3,15.4)
\qbezier(54.3,15.4)(58.1,15.0)(61.9,15.0)
\qbezier(61.9,15.0)(65.7,15.4)(69.5,16.0)
\qbezier(69.5,16.0)(73.2,17.0)(76.8,18.2)
\qbezier(76.8,18.2)(80.3,19.8)(83.6,21.7)
\qbezier(83.6,21.7)(86.8,23.8)(89.7,26.2)
\qbezier(89.7,26.2)(92.5,28.9)(95.0,31.7)
\qbezier(103.5,48.7)(103.5,48.7)(104.4,52.4)
\qbezier(104.4,52.4)(104.8,56.2)(105.0,60.0)
\put(55,0){(f)}
\end{picture}
\caption{(a) The solid lines indicate $G[C]$. Each rectangle represents a $T$-brick. Each rounded rectangle represents an $R$-brick. Together with the dashed lines, we have $G[C\cup C_{j-1}^*]$. (b) is the brick decomposition of $G[C\cup C_{j-1}^*]$. The blackened lines indicate $B^{(j-1)}$. (c) depicts $G[C\cup Y_j]$, the dashed lines are the edges added together with the addition of $Y_j$. (d) is the brick decomposition of $G[C\cup Y_j]$. The blackened lines indicate $B$. (e) is the brick decomposition of $G[C\cup C_{j-1}^*\cup Y_j]=G[C\cup C_j^*]$. The blackened lines indicate $B^{(j)}$. In (f), an $R$-brick is divided into smaller $R$-bricks of $G[C\cup Y_j]$ by the new $T$-brick $B$. The double circled nodes are in $V(B)\cap V(R)$. The center part belongs to the new $T$-brick $B$. The top arc belongs to $E(B)\cap E(R)$. The dashed lines are virtual edges.}\label{fig14-6-26-1}
\end{center}
\end{figure*}

By the definition of $q$,
\begin{align}\label{eq14-6-23-12}
\Delta_{Y_j}q(C)-\Delta_{Y_j}q(C\cup C_{j-1}^*)=\Delta_{Y_j}\sum_{R\in \mathcal R(C)}(2|R|-5)-\Delta_{Y_j}\!\!\!\!\sum_{R\in\mathcal R(C\cup C_{j-1}^*)}\!\!\!\!(2|R|-5).
\end{align}

Let $R$ be an $R$-brick of $G[C]$. If $R$ contributes to the first term of \eqref{eq14-6-23-12}, then by Observation \ref{obs1}, $R$ is divided by the new $T$-brick $B$ of $G[C\cup Y_j]$ containing $Y_j$. By Lemma \ref{lemma6} $(\romannumeral1)$, it can be seen that

\vskip 0.2cm\noindent {\em SubClaim 2.1.} $V(B^{(j)})\cap V(R)=\big(V(B^{(j-1)})\cup V(B)\big)\cap V(R).$\vskip 0.2cm



As in the proof of Lemma \ref{lemma11}, denote by $\mathcal R_{C,X}^{div}(R)$ the set of smaller $R$-bricks of $G[C\cup X]$ arising from the division of $R$ after $X$ is added into $C$.

\vskip 0.2cm\noindent {\em SubClaim 2.2.} $\sum\limits_{R'\in\mathcal R_{C,Y_j}^{div}(R)}(2|R'|-5)-(2|R|-5)=
|E(B)\cap E(R)|-3|V(B)\cap V(R)|+5$.\vskip 0.2cm

For simplicity of statement, suppose $s=|\mathcal R_{C,Y_j}^{div}(R)|$ and $\mathcal R_{C,Y_j}^{div}(R)=\{R_1,\ldots,R_s\}$ (see Fig.\ref{fig14-6-26-1}(f) for an illustration). For $i=1,\ldots,s$, denote by $S_i=V(R_i)\cap V(B)=\{v_{i1},v_{i2}\}$. By Observation \ref{obs1}, the subgraph of $B$ induced by $(E(B)\cap E(R))\cup\{v_{11}v_{12},v_{21}v_{22},
\ldots,v_{s1}v_{s2}\}$ is a cycle. So,
\begin{equation}\label{eq14-8-31-1}
|V(R)|=\sum_{i=1}^{s}
|R_i|+|V(B)\cap V(R)|-2s
\end{equation}
and
\begin{equation}\label{eq14-8-31-2}
s+|E(B)\cap E(R)|=|V(B)\cap V(R)|.
\end{equation}
It follows that
\begin{align}\label{equ5}
&\sum_{i=1}^{s}(2|R_{i}|-5)\nonumber\\
=\ &2(|V(R)|-|V(B)\cap V(R)|+2s)-5s\nonumber\\
=\ &2(|V(R)|-|V(B)\cap V(R)|) \nonumber\\
&-(|V(B)\cap V(R)|-|E(B)\cap E(R)|)\nonumber\\
=\ &2|R|-3|V(B)\cap V(R)|+|E(B)\cap E(R)|.
\end{align}
Then, SubClaim 2.2 follows.

Notice that SubClaim 2.2 provides an expression for each $R\in\mathcal R(C)$ in the first term of the righthand side of \eqref{eq14-6-23-12}. Estimation on the second term of the righthand side of \eqref{eq14-6-23-12} can make use of SubClaim 2.2. In fact, consider those $R$-bricks in $\mathcal R^{div}_{C,C_{j-1}^*}(R)$ (where $R$ is the $R$-brick in SubClaim 2.2), they are further divided into smaller $R$-bricks when $Y_j$ is added into $C\cup C_{j-1}^*$ (see Fig.\ref{fig14-6-26-1}(b) and (e)).
Making use of SubClaim 2.2 (replacing $Y_j$ by $C_j^*$ and $C_{j-1}^*$, and replacing $B$ by $B^{(j)}$ and $B^{(j-1)}$, correspondingly), it can be estimated that
\begin{align}\label{equ7}
&\Delta_{Y_j}\sum_{R'\in\mathcal R^{div}_{C,C_{j-1}^*}(R)}(2|R'|-5)\nonumber\\
=\ &\left(\sum_{R'\in \mathcal R^{div}_{C,C_{j}^{*}}(R)}(2|R'|-5)-(2|R|-5)\right)-\left(\sum_{R'\in\mathcal R^{div}_{C,C_{j-1}^{*}}(R)}(2|R'|-5)-(2|R|-5)\right)\nonumber\\
=\ &|(E(B^{(j)})\backslash E(B^{(j-1)}))\cap E(R)|-3|(V(B^{(j)})\backslash V(B^{(j-1)}))\cap V(R)|.
\end{align}
So, for each $R\in\mathcal R(C)$, if we denote by $g(R)$ the total value of those terms in the righthand side of \eqref{eq14-6-23-12} which are related with $R$, then by SubClaim 2.2 and \eqref{equ7}, it can be seen that $g(R)$ has the following expression:
\begin{eqnarray}\label{equ8}
g(R)&=&5+3\big(|(V(B^{(j)})\backslash V(B^{(j-1)}))\cap V(R)|-|V(B)\cap V(R)|\big)\nonumber\\
&&+\big(|E(B)\cap E(R)|-|(E(B^{(j)})\backslash E(B^{(j-1)}))\cap E(R)|\big).
\end{eqnarray}
Notice that \eqref{eq14-6-23-12} can be rewritten as the following:

\vskip 0.2cm\noindent {\em SubClaim 2.3.} $\Delta_{Y_j}q(C)-\Delta_{Y_j}q(C\cup C_{j-1}^*)=\sum_{R\in \mathcal A}g(R)$, where $\mathcal A=\{R\in\mathcal R(C)\setminus \mathcal R(C\cup C_{j-1}^*)\colon R\ \mbox{is divided by}\ B\}$.\vskip 0.2cm

The reason why only those $R$-bricks in $\mathcal R(C)\setminus\mathcal R(C\cup C_{j-1}^*)$ are considered is as follows: If $R\in \mathcal R(C)\cap \mathcal R(C\cup C_{j-1}^{*})$, then the changes on $R$ are the same in the two terms of \eqref{eq14-6-23-12}, which will cancel. The reason why only those $R$-bricks divided by $B$ are considered is the following: for any $R$-brick $R$ which is not divided by $B$, adding $Y_j$ does not change $R$, neither does it change any smaller $R$-bricks in $\mathcal R^{div}_{C,C_{j-1}^*}(R)$.

The next subclaim estimates the upper bound for $g(R)$. Suppose $|V(B)\cap V(R)|-|(V(B^{(j)})\setminus V(B^{(j-1)}))\cap V(R)|= t(R)$.

\vskip 0.2cm\noindent {\em SubClaim 2.4.}\\
 $g(R)\leq \left\{\begin{array}{ll} 5, &  t(R)=0,\\ -1, & t(R)\geq 1\ \mbox{and \eqref{eq14-9-3-4} occurs},\\ 5-2t(R)-1, & t(R)\geq 1\ \mbox{and \eqref{eq14-9-3-4} does not occur.}\end{array}\right.$\vskip 0.2cm

By SubClaim 2.1, it can be seen that $t(R)$ can be rewritten as $t(R)=|V(B)\cap V(B^{(j-1)})\cap V(R)|$. So, $t(R)\geq 0$. If $t(R)=0$, then $V(B)\cap V(B^{(j-1)})\cap V(R)=\emptyset$, and thus $\big(E(B)\cap E(R)\big)\cap\big(E(B^{(j-1)})\cap E(R)\big)=\emptyset$. By noticing that $V(B)\cap V(R)\subseteq V(B^{(j)})\cap V(R)$ by SubClaim 2.1, and thus $E(B)\cap E(R)\subseteq E(B^{(j)})\cap E(R)$, we have $E(B)\cap E(R)\subseteq (E(B^{(j)})\setminus E(B^{(j-1)}))\cap E(R)$, and thus $g(R)\leq 5$ by \eqref{equ8}. When $t(R)\geq 1$, by recalling that $R$ is a cycle, we see that  $|E(B)\cap E(B^{(j-1)})\cap E(R)|\leq t(R)-1$ unless
\begin{align}\label{eq14-9-3-4}
V(B)\cap V(R)&=V(B^{(j-1)})\cap V(R)\nonumber\\
&=V(B^{(j)})\cap V(R)=V(R).
\end{align}
If \eqref{eq14-9-3-4} occurs, then we see from \eqref{equ8} that $g(R)=5-2|R|\leq -1$. Otherwise, $|E(B)\cap E(R)|- |E(B^{(j)})\setminus E(B^{(j-1)}))\cap E(R)|=|E(B)\cap E(B^{(j-1)})\cap E(R)|\leq t(R)-1$ and thus $g(R)\leq 5-2t(R)-1$ by \eqref{equ8}. SubClaim 2.4 is proved.

\vskip 0.2cm\noindent {\em SubClaim 2.5.} If $|\mathcal A|\geq 2$, then for every $R\in \mathcal A$, $t(R)\geq 2$. \vskip 0.2cm

In fact, since such an $R$-brick does not belong to $\mathcal R(C\cup C_{j-1}^*)$, it is divided by $B^{(j-1)}$. For $R,R'\in\mathcal A$, consider the unique path $Q_{RR'}$ in the brick tree of $G[C]$ connecting $R$ and $R'$, the first 2-separator incident with $R$, say $S$, must belong to $V(B^{(j-1)})$ (by Observation \ref{obs1} ($\romannumeral4$)). Since both $R$ and $R'$ are divided by $B$, for the same reason, $S\subseteq V(B)$. So, $t(R)=|V(B)\cap V(B^{(j-1)})\cap V(R)|\geq |S|=2$. SubClaim 2.5 is proved.

\vskip 0.2cm Combing SubClaim 2.4 and SubClaim 2.5, if $|\mathcal A|\geq 2$, then $g(R)\leq 0$ for any $R\in\mathcal A$. Otherwise, $g(R)=0$ if $\mathcal A=\emptyset$ and $g(R)\leq 5$ if $|\mathcal A|=1$. Them Claim 2 follows from SubClaim 2.3.

Combining Claim 1 and Claim 2, the first part of this lemma is proved.

In the case that every $R$-brick of $G[C]$ has length three, we see from Observation \ref{obs1} $(\romannumeral2)$ that after adding a node set, any $R$-brick either diminishes or remains the same. Denote by $\mathcal R^{dim}_{C,X}$ the set of $R$-bricks diminished after adding $X$ into $C$. By Lemma \ref{lemma6} $(\romannumeral1)$ and $(\romannumeral2)$, we see that $\mathcal R^{dim}_{C\cup C_{j-1}^*,Y_j}\subseteq \mathcal R^{dim}_{C,Y_j}$. Hence
\begin{equation}\label{eq15-1-9-3}
\Delta_{Y_j}q(C)-\Delta_{Y_j}q(C\cup C_{j-1}^*)=-\sum_{R\in\mathcal R^{dim}_{C,Y_j}\setminus\mathcal R^{dim}_{C\cup C_{j-1}^*,Y_j}}(2|R|-5)\leq 0.
\end{equation}
Furthermore, if $\Delta_{Y_j}|\mathcal T(C)|-\Delta_{Y_j}|\mathcal T(C\cup C_{j-1}^*)|=1$, then $|\mathcal Q_{Y_j}^T(C\cup C_{j-1}^*)|-|\mathcal Q_{Y_j}^T(C)|=1$ by \eqref{eq15-1-9-11}, which is possible only when adding $C_{j-1}^*$ into $C$ creates a new $T$-brick, which occurs only when every brick in $\mathcal Q_{C_{j-1}^*}(C)$ is an $R$-brick. Combining this with Lemma \ref{lemma6} $(\romannumeral2)$, we see that $\mathcal R^{dim}_{C,Y_j}\setminus\mathcal R^{dim}_{C\cup C_{j-1}^*,Y_j}$ contains at least one $R$-brick, and thus inequality \eqref{eq15-1-9-3} becomes
$$
\Delta_{Y_j}q(C)-\Delta_{Y_j}q(C\cup C_{j-1}^*)=-\sum_{R\in\mathcal R^{dim}_{C,Y_j}\setminus\mathcal R^{dim}_{C\cup C_{j-1}^*,Y_j}}(2|R|-5)\leq -1.
$$
Then the second part of this lemma follows from the definition of $f$.
\end{proof}

In the following, we use $X_1,X_2,\ldots,X_g$ to denote the sets chosen by Algorithm \ref{algo1}, in the order of their selection into set $C$. For $1\leq i \leq g$, denote $C_i =C_0\cup X_1\cup X_2, \cdots,\cup X_i$.

\begin{lemma}\label{le15-1-7-1}
For $1\leq i \leq g$, we have $\frac{-\bigtriangleup_{X_{i}}f(C_{i-1})}{|X_{i}|}
\geq\frac{1}{2}$. Furthermore, if $\mathcal R(C_{i-1})$ contains at least one $R$-brick of length at least $4$, then $\frac{-\bigtriangleup_{X_{i}}f(C_{i-1})}{|X_{i}|}
\geq1$.
\end{lemma}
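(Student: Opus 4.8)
The plan is to exploit the greedy rule directly: since Algorithm~\ref{algo1} selects the brick-bridge maximizing $-\bigtriangleup_{X}f(C_{i-1})/|X|$ among all brick-bridges with $|X|\leq 2$, it suffices in each case to \emph{exhibit one} brick-bridge whose ratio already meets the claimed bound, and then invoke maximality. Before doing so I would record two facts that unlock the earlier lemmas at step $i$: by Corollary~\ref{cor14-5-9-2}$(\romannumeral1)$ the set $C_{i-1}$ (obtained from the $(2,m)$-CDS $C_0$ by adding nodes of $V(G)\setminus C_0$) is again a $(2,m)$-CDS, and since the while-loop is still running we have $f(C_{i-1})>1$, so by Lemma~\ref{lem14-6-19-6} the graph $G[C_{i-1}]$ is not $3$-connected. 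These are exactly the hypotheses of Lemmas~\ref{lemma5} and~\ref{lemma11}.

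For the first inequality I would invoke Lemma~\ref{lemma5} to obtain a brick-bridge $P$ of $G[C_{i-1}]$ with $|int(P)|\leq 2$, and then the first part of Lemma~\ref{lemma11} to get $-\bigtriangleup_{int(P)}f(C_{i-1})\geq 1$. Hence
$$
\frac{-\bigtriangleup_{int(P)}f(C_{i-1})}{|int(P)|}\geq\frac{1}{|int(P)|}\geq\frac{1}{2},
$$
and since the algorithm maximizes this quantity, $X_i$ inherits the bound $\tfrac12$.

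For the second inequality, assume $\mathcal R(C_{i-1})$ contains an $R$-brick $R_a$ with $|R_a|\geq 4$, and split on whether $G[C_{i-1}]$ is a cycle. If $G[C_{i-1}]$ is \emph{not} a cycle, I would apply the second part of Lemma~\ref{lemma5} with $B=R_a$ to find a brick-bridge $P$ satisfying $R_a\in\mathcal Q_{int(P)}$ and $|\mathcal Q_{int(P)}|\geq 2$. Since $|R_a|\geq 4$, the hypotheses of the second part of Lemma~\ref{lemma11} hold, so $-\bigtriangleup_{int(P)}f(C_{i-1})\geq 2$; combined with $|int(P)|\leq 2$ this gives a ratio $\geq 2/|int(P)|\geq 1$, and maximality transfers the bound to $X_i$.

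The hard part is the remaining subcase, where $G[C_{i-1}]$ \emph{is} a cycle (necessarily of length $\geq 4$, since it is the sole $R$-brick and has length $\geq 4$), because the ``for any brick'' clause of Lemma~\ref{lemma5} explicitly excludes cycles. Here I would argue directly with a single-node brick-bridge. Since $G$ is $3$-connected while a cycle of length $\geq 4$ is not, we have $V(G)\setminus C_{i-1}\neq\emptyset$; pick any $u$ in it. Because $C_{i-1}\supseteq C_0$ is an $m$-fold dominating set with $m\geq 3$, the node $u$ has at least three neighbors on the cycle, and as a cycle of length $\geq 4$ is triangle-free, two of these neighbors, say $a$ and $b$, are non-adjacent. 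Then $P=aub$ is a brick-bridge with $int(P)=\{u\}$, so the first part of Lemma~\ref{lemma11} yields $-\bigtriangleup_{\{u\}}f(C_{i-1})\geq 1$, giving ratio $\geq 1/1=1$; maximality again passes this to $X_i$, completing the proof.
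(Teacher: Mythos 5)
Your proposal is correct and takes essentially the same approach as the paper: Lemma \ref{lemma11} combined with $|X_i|\leq 2$ gives the bound $\tfrac{1}{2}$, the second parts of Lemmas \ref{lemma5} and \ref{lemma11} together with greedy maximality handle the case where an $R$-brick has length at least $4$ and $G[C_{i-1}]$ is not a cycle, and a single-node brick-bridge over two non-adjacent neighbors (which exist since $m\geq 3$ and a chordless cycle of length at least $4$ is triangle-free) settles the cycle case. The only cosmetic difference is that the paper first observes $G[C_i]$ cannot be a cycle for $i>0$ and so treats the cycle case only at $C_0$, while you handle it uniformly for every $i$ -- an equally valid variant.
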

\begin{proof}
For $1\leq i\leq g$, we have $-\bigtriangleup_{X_{i}}f(C_{i-1})\geq1$ by Lemma \ref{lemma11}. Then $\frac{-\bigtriangleup_{X_{i}}f(C_{i-1})}{|X_{i}|}\geq\frac{1}{2}$ follows from $|X_{i}|\leq2$.

If $G[C_{i-1}]$ is not a cycle and $\mathcal R(C_{i-1})$ contains at least one $R$-brick of length at least $4$, then by Lemma \ref{lemma5}, there exists a brick-bridge $P$ with $X=int(P)$ such that $|\mathcal Q_{X}|\geq2$, $|X|\leq2$ and $R\in \mathcal Q_{X}$. By Lemma \ref{lemma11} and the greedy rule of Algorithm \ref{algo1}, we have $-\bigtriangleup_{X}f(C_{i-1})\geq2$ and $\frac{-\bigtriangleup_{X_{i}}f(C_{i-1})}{|X_{i}|}
\geq\frac{-\bigtriangleup_{X}f(C_{i-1})}{|X|}\geq1$.

Notice that $G[C_{i}]$ cannot be a cycle for $i>0$. Recall that the case that $G[C_{0}]$ is a triangle is dealt with separately in Algorithm \ref{algo1}. In the case that $G[C_{0}]$ is a cycle of length at least $4$, consider an arbitrary node $v\in V(G)\backslash C_{0}$. Since $C_{0}$ is a $(2,m)$-$CDS$ and $m\geq3$, node $v$ must have two neighbors $u_{1},u_{2}$ in $C_{0}$ which are not consecutive on cycle $G[C_{0}]$. Let $P=u_{1}vu_{2}$. Then $P$ is a brick-bridge of $C_{0}$ and $\frac{-\bigtriangleup_{X_{1}}f(C_{0})}{|X_{1}|}
\geq\frac{-\bigtriangleup_{\{v\}}f(C_{0})}{|\{v\}|}\geq1$. The lemma is proved.
\end{proof}

Now, we are ready to prove the performance ratio.

\begin{theorem}\label{th-0}
Algorithm \ref{algo1} is a polynomial-time $\gamma$-approximation for the minimum $(3,m)$-CDS problem, where $\gamma=(3\alpha+2\ln2)$ for $\alpha<4$ and $\gamma=(\alpha+8+2\ln(2\alpha-6))$ for $\alpha\geq4$, $\alpha$ is the performance ratio for the minimum $(2,m)$-CDS problem.
\end{theorem}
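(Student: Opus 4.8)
The plan is to charge the two stages of Algorithm \ref{algo1} separately. Writing the output as $C_g=C_0\cup X_1\cup\dots\cup X_g$, its size is $|C_0|+\sum_{i=1}^{g}|X_i|$. Since every $(3,m)$-CDS is also a $(2,m)$-CDS, the optimum $(2,m)$-CDS has size at most $|C^{\ast}|$, so the initialization cost obeys $|C_0|\le\alpha|C^{\ast}|$, and the whole task reduces to bounding the greedy cost $\sum_i|X_i|$ by a $(\gamma-\alpha)|C^{\ast}|$ term. For polynomiality I would note that a brick decomposition is computable in polynomial time, that there are only polynomially many candidate brick-bridges with at most two internal nodes, and that by Lemma \ref{lemma11} each iteration lowers $f$ by at least $1$ while $f(C_0)=O(n)$; hence the while-loop runs a polynomial number of times.

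For the greedy cost I would run the standard potential argument but fight the non-submodularity of $f$ through Lemma \ref{lemma7}. Fix a greedy step $i$ and apply the decomposition of Lemma \ref{lemma6} to $C^{\ast}\setminus C_{i-1}=Y_1\cup\dots\cup Y_h$. Because $G[C_{i-1}\cup C^{\ast}]$ is $3$-connected (Corollary \ref{cor14-5-9-2}), so that its $f$-value is $1$ by Lemma \ref{lem14-6-19-6}, telescoping gives $\sum_{j\le l}\bigl(-\Delta_{Y_j}f(C_{i-1}\cup C^{\ast}_{j-1})\bigr)=f(C_{i-1})-1=:\phi_{i-1}$. Applying Lemma \ref{lemma7} termwise rewrites each term as $-\Delta_{Y_j}f(C_{i-1})$ at the price of an additive $6$ per stage, and since each $Y_j$ is the interior of a brick-bridge of $G[C_{i-1}]$ it is a legal candidate for the greedy choice, so the greedy rule yields $\sum_{j\le l}(-\Delta_{Y_j}f(C_{i-1}))\le(a_i/x_i)\sum_{j\le l}|Y_j|$, where $a_i=-\Delta_{X_i}f(C_{i-1})$ and $x_i=|X_i|$. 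Bounding $\sum_{j\le l}|Y_j|\le 2|C^{\ast}|$ and the number of stages $l\le|C^{\ast}|$ (each stage consumes essentially a fresh optimal node, via condition $(\romannumeral4)$ of Lemma \ref{lemma6}) produces the core recursion $a_i\ge x_i(\phi_{i-1}-6|C^{\ast}|)/(2|C^{\ast}|)$.

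The crucial device is the shift $\psi_i:=\phi_i-6|C^{\ast}|$, under which the recursion collapses to the clean multiplicative form $\psi_i\le\psi_{i-1}(1-x_i/(2|C^{\ast}|))\le\psi_{i-1}e^{-x_i/(2|C^{\ast}|)}$ whenever $\phi_{i-1}\ge 6|C^{\ast}|$; this is exactly where the factor $2$ in front of the logarithm comes from. I would then split the iterations into a logarithmic phase ($\phi_{i-1}$ large) contributing about $2|C^{\ast}|\ln(\psi_0/\psi^{\ast})$ with $\psi_0=\phi_0-6|C^{\ast}|$, and a linear phase. In the linear phase I would exploit the dichotomy supplied by the second halves of Lemma \ref{lemma7} and Lemma \ref{le15-1-7-1}: when every $R$-brick has length three there is no $+6$ penalty and the recursion is exact, whereas when some $R$-brick has length $\ge 4$ one has the stronger $a_i\ge x_i$; combined, these give $a_i\ge x_i$ for all steps with $\phi_{i-1}\ge 2|C^{\ast}|$ and $a_i\ge x_i/2$ below that, so the tails cost only a constant multiple of $|C^{\ast}|$.

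Finally I would feed in $f(C_0)\le 2|C_0|\le 2\alpha|C^{\ast}|$, which follows from $f(C)=|\mathcal B(C)|+\sum_{R}(2|R|-6)$ and the fact that a length-$r$ cycle is the extremal brick, so that $\psi_0\le(2\alpha-6)|C^{\ast}|$ and the logarithmic term becomes $2|C^{\ast}|\ln(2\alpha-6)$. Optimizing the phase threshold and collecting the constant tails yields a greedy bound of $(8+2\ln(2\alpha-6))|C^{\ast}|$, hence the ratio $\alpha+8+2\ln(2\alpha-6)$. The case $\alpha<4$ is precisely the regime where $2\alpha-6<2$ makes the shifted logarithm degenerate, so there I would drop the logarithmic phase, bound the high-$\phi$ steps directly by $\sum x_i\le\sum a_i\le\phi_0\le 2\alpha|C^{\ast}|$ and absorb only the short tail into a $2\ln 2$ correction, giving $3\alpha+2\ln 2$; the two formulas coincide at $\alpha=4$. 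The main obstacle throughout is the non-submodularity of $f$: everything hinges on the additive-$6$ control of Lemma \ref{lemma7} and on the short/long $R$-brick dichotomy, and pinning down the exact constant $8$ rather than a slightly larger one requires a careful choice of the phase boundaries.
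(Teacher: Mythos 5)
Your proposal is correct and follows essentially the same route as the paper's proof: your shifted potential $\psi_i=f(C_i)-1-6|C^\ast|$ is exactly the paper's $a_i$, your two-phase split (penalty-free recursion once all $R$-bricks are triangles versus unit efficiency while a long $R$-brick survives) is the paper's index $q$ combined with Lemmas \ref{lemma7} and \ref{le15-1-7-1}, and your bound $f(C_0)\leq 2|C_0|$ is the paper's Claim~5 (proved there by induction on the brick decomposition, since bricks share separator nodes), yielding the same constants via the same optimization over the crossover value of the potential. The only cosmetic difference is that you phrase the decay multiplicatively ($\psi_i\leq\psi_{i-1}e^{-x_i/(2|C^\ast|)}$) where the paper sums $\min\{1,2t/x\}$ via an integral estimate; these are equivalent.
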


\begin{proof}
By Corollary \ref{cor14-5-9-2}, every $C_{i}$ is a $(2,m)$-CDS for $0\leq i \leq g$. Suppose $q$ is the first index such that $\mathcal R(C_{q})$ contains no $R$-brick of length at least four. Let $C^{*}$ be a minimum $(3,m)$-CDS of $G$. Denote $|C^{*}|=t$.

\vskip 0.2cm \noindent {\em Claim 1.} $|C_{0}|\leq\alpha t$.
\vskip 0.2cm

Since $C_{0}$ is an $\alpha$-approximation for the minimum $(2,m)$-CDS problem, and because the size of a minimum $(2,m)$-CDS is no greater than the size of a minimum $(3,m)$-CDS, the claim follows.

For $0\leq i\leq g$, denote $a_{i}=f(C_{i})-6t-1$ and $b_{i}=f(C_{i})-1$.

\vskip 0.2cm\noindent {\em Claim 2.}\\
$|X_{i+1}|\leq\left\{\begin{array}{ll} \min\{a_{i}-a_{i+1},2t\frac{a_{i}-a_{i+1}}{a_{i}}\}, & \mbox{for}\ 0\leq i\leq q\!-\!1,\\  \min\{\!2(b_{i}\!-\!\!b_{i+1}\!),\!2t\!\frac{b_{i}-b_{i+1}}{b_{i}}\} , & \mbox{for}\ q\leq i\leq g\!-\!1.\end{array}\right.$

For any fixed $i$ with $0\leq i\leq g-1$, decompose $C^{\ast}\backslash C_i$ into $Y_{1}^{(i)},Y_{2}^{(i)},\ldots,Y_{h_{i}}^{(i)}$ satisfying those conditions of Lemma \ref{lemma6}. For $1\leq j\leq h_{i}$, denote $C_{j}^{\ast}=Y_{1}^{(i)}\cup Y_{2}^{(i)},\ldots,\cup Y_{j}^{(i)}$.
Set $C_{0}^{\ast}=\emptyset$. Suppose $l_{i}$ is the first index such that $G[C_{i}\cup C_{l_{i}}^{\ast}]$ is $3$-connected.

First, consider $C_{i}$ with $i=0,1,\ldots,q-1$. By Lemma \ref{lemma7}, for $1\leq j\leq l_{i},$
\begin{align}\label{eq13}
-\bigtriangleup_{Y_{j}}f(C\cup C_{j-1}^{\ast})\leq -\bigtriangleup_{Y_{j}}f(C)+6.
\end{align}
By the greedy rule of Algorithm \ref{algo1}, we have
\begin{equation}\label{eq14}
\frac{-\bigtriangleup_{X_{i+1}}f(C_{i})}{|X_{i+1}|}\geq \frac{-\bigtriangleup_{Y_{j}^{(i)}}
f(C_{i})}{|Y_{j}^{(i)}|},\ \mbox{for}\ j=1,2,\ldots,l_{i}.
\end{equation}
By Lemma \ref{lemma6},
\begin{equation}\label{eq15}
\sum_{j=1}^{l_{i}}|Y_{j}^{(i)}|
\leq |C^{\ast}\backslash C_{i}|+l_{i}
\leq 2t.
\end{equation}
Combing inequalities (\ref{eq13}),(\ref{eq14}),(\ref{eq15}) with the assumption that $G[C_{i}\cup C_{l_{i}}^{\ast}]$ is $3$-connected (and thus $f(C_{i}\cup C_{l_{i}}^{\ast})=1$ by Lemma \ref{lem14-6-19-6}), we have \begin{align}
\frac{-\bigtriangleup_{X_{i+1}}f(C_{i})}
{|X_{i+1}|}
&\geq\frac{-\sum_{j=1}^{l_{i}}
\bigtriangleup_{Y_{j}^{(i)}}
f(C_{i})}{\sum_{j=1}^{l_{i}}
|Y_{j}^{(i)}|}\notag\\
&\geq \frac{\sum_{j=1}^{l_{i}}
(-\bigtriangleup_{Y_{j}^{(i)}}f(C_{i}\cup C_{j-1}^{\ast})-6)}{2t}\notag\\
&=\frac{-(f(C_{i}\cup C_{l_{i}}^{\ast})-f(C_{i}))-6l_{i}}{2t}\notag\\
&\geq\frac{-(f(C_{i}\cup C_{l_{i}}^{\ast})-f(C_{i}))-6t}{2t}\notag\\
&=\frac{f(C_{i})-6t-1}{2t}.
\end{align}
The above inequality can be rewritten as

\begin{equation}\label{eq16}
\frac{a_{i}-a_{i+1}}{|X_{i+1}|}\geq \frac{a_{i}}{2t},\ \mbox{for}\ 0\leq i \leq q-1.
 \end{equation}
 and thus
\begin{equation}\label{eq17}
|X_{i+1}|\leq 2t\frac{a_{i}-a_{i+1}}{a_{i}},\ \mbox{for}\ 0\leq i \leq q-1.
\end{equation}

Next, consider $C_{i}$ with $q\leq i\leq g-1$. By the second part of Lemma \ref{lemma7}, we have
\begin{align}\label{eq15-1-4-3}
-\bigtriangleup_{Y_{j}}f(C\cup C_{j-1}^{\ast})\leq -\bigtriangleup_{Y_{j}}f(C).
\end{align}
Similar to the derivation of inequalities (\ref{eq16}) and (\ref{eq17}), we have
\begin{equation}\label{eq15-1-4-4}
\frac{b_{i}-b_{i+1}}{|X_{i+1}|}\geq \frac{b_{i}}{2t}\ \mbox{for}\ q\leq i \leq g-1
 \end{equation}
 and
\begin{equation}\label{eq15-1-4-5}
|X_{i+1}|\leq 2t\frac{b_{i}-b_{i+1}}{b_{i}}\ \mbox{for}\ q\leq i \leq g-1.
\end{equation}

By Lemma \ref{lemma11},
\begin{equation}\label{eq15-1-10-1}
\frac{f(C_{i})-f(C_{i+1})}{|X_{i+1}|}\geq \left\{\begin{array}{ll} 1, & \mbox{for}\ 0\leq i\leq q-1,\\ 1/2, & \mbox{for}\ q\leq i\leq g-1.\end{array}\right.
\end{equation}
By the definition of $a_{i}$ and $b_{i}$, $f(C_{i})-f(C_{i+1})=a_{i}-a_{i+1}=b_{i}-b_{i+1}$. So
$|X_{i+1}|\leq a_{i}-a_{i+1}$ for $0\leq i\leq q-1$ and $|X_{i+1}|\leq 2(b_{i}-b_{i+1})$ for $q\leq i\leq g-1$.
Claim 2 is proved.

\vskip 0.2cm \noindent {\em Claim 3.} If $a_{0}\geq2t$, then $\sum_{i=0}^{g-1}|X_{i+1}|\leq8t+2t\ln(a_{0}/t)$.
\vskip 0.2cm

To prove this Claim, we first prove the following inequality.

$\displaystyle \sum_{i=0}^{g-1}|X_{i+1}|$
\vskip -0.5cm
\begin{equation}
\leq\left\{\begin{array}{ll} 2t\ln\frac{a_0}{a_q}+2t+2t\ln\frac{a_q+6t}t, & \mbox{if}\ a_q\geq2t,\\
 4t-a_q+2t\ln\frac{a_0}{2t}+2t\ln\frac{a_q+6t}t, & \mbox{if}\ -5t\leq a_q<2t,\\
 14t+a_q+2t\ln\frac{a_0}{2t},& \mbox{if}\ a_q<-5t .\end{array}\right.\label{eq15-1-16-2}
\end{equation}

The sequence $a_{1},a_{2},\ldots,a_{g}$ is monotone decreasing with respect to $i$ and the function $\min\{1,\frac{2t}{x}\}$ is monotone decreasing with respect to $x$. Therefore, if $a_{0}\geq2t$, then by Claim 2, we can estimate $\sum_{i=0}^{q-1}|X_{i+1}|$ by an integral as follows:
\begin{align}
\sum_{i=0}^{q-1}|X_{i+1}|
&\leq\int_{a_{q}}^{a_{0}}\min\{1,\frac{2t}{x}\}dx\notag\\
&=\left\{\begin{array}{ll} \displaystyle 2t\int_{a_q}^{a_{0}}\frac{1}{x}dx, & \mbox{if}\ a_{q}\geq2t,\\ \displaystyle\int_{a_{q}}^{2t}1dx+2t\int_{2t}^{a_{0}}\frac{1}{x}dx , & \mbox{if}\ a_{q}<2t,\end{array}\right.\notag\\
&=\left\{\begin{array}{ll} \displaystyle 2t\ln(a_{0}/a_{q}), & \mbox{if}\ a_{q}\geq2t,\\ \displaystyle 2t-a_{q}+2t\ln(a_{0}/2t), & \mbox{if}\ a_{q}<2t.\end{array}\right.\label{eq15-1-4-7}
\end{align}

Similar argument yields,
$$
\sum_{i=q}^{g-1}|X_{i+1}|\leq\left\{\begin{array}{ll} 2(t-b_{g})+ 2t\ln(b_{q}/t), & \mbox{if}\ b_{q}\geq t,\\ 2(b_{q}-b_{g}), & \mbox{if}\ b_{q}<t.\end{array}\right.
$$
Notice that $b_{g}=0$ and $b_{q}=a_{q}+6t$. So
\begin{align}
\sum_{i=q}^{g-1}|X_{i+1}|&\leq\left\{\begin{array}{ll} 2t+ 2t\ln((a_{q}+6t)/t), & \mbox{if}\ a_{q}\geq -5t,\\ 2(a_{q}+6t), & \mbox{if}\ a_{q}<-5t.\end{array}\right.\label{eq15-1-4-14}
\end{align}

Combining (\ref{eq15-1-4-7}) and (\ref{eq15-1-4-14}), inequality (\ref{eq15-1-16-2}) follows.

Next, we estimate the right hand side of (\ref{eq15-1-16-2}). If $a_{q}\geq2t$, then
\begin{align}
&\ln\left(\frac{a_{0}}{a_{q}}\right)+\ln\left(\frac{a_{q}+6t}{t}\right) \notag\\
=&\ln\left(\frac{a_{0}(a_{q}+6t)}{a_{q}t}\right)
=\ln\left(\frac{a_{0}}{t}\right)+\ln\left(\frac{a_{q}+6t}{a_{q}}\right)\notag\\
=&\ln\left(\frac{a_{0}}{t}\right)+\ln\left(1+\frac{6t}{a_{q}}\right)
\leq\ln\left(\frac{a_{0}}{t}\right)+\ln4.\label{eq15-1-10-5}
\end{align}
So in this case, $\sum_{i=0}^{g-1}|X_{i+1}|\leq2t
+2t\ln4+2t\ln(a_{0}/t)<4.78t+2t\ln(a_{0}/t)$. It is easy to see that
\begin{equation}\label{eq15-1-14-1}
\begin{array}{c}
\mbox{when $z=-4t$, function $-z+2t\ln((z+6t)/t)$}\\
\mbox{achieves its maximum value $4t+2t\ln2$.}
\end{array}
\end{equation}
So in the case $-5t\leq a_{q}<2t$, we have $\sum_{i=0}^{g-1}\leq8t+2t\ln(a_{0}/t)$. If $a_q<-5t$, then $a+b\leq 9t+2t\ln (a_0/2t)<7.62t+2t\ln (a_0/t)$. In any case, Claim 3 is true.

\vskip 0.2cm \noindent {\em Claim 4.} If $a_{0}<2t$, then $\sum_{i=0}^{g-1}|X_{i+1}|\leq a_{0}+6t+2t\ln2.$
\vskip 0.2cm

If $a_{0}<2t$, then $a_{i}<2t$ for any $i=0,1,\ldots,g-1$. In this case, by Claim 2, $\sum_{i=0}^{g-1}|X_{i+1}|$ can be estimated as
\begin{align*}
\sum_{i=0}^{g-1}|X_{i+1}|
&=\sum_{i=0}^{q-1}|X_{i+1}|+\sum_{i=q}^{g-1}|X_{i+1}|\notag\\
&\leq \sum_{i=0}^{q-1}(a_{i}-a_{i+1})+\sum_{i=q}^{g-1}
\min\{2,\frac{2t}{b_{i}}\}\cdot(b_{i}-b_{i+1})\notag\\
&\leq\left\{\begin{array}{ll} a_{0}\!-\!a_{q}\!\!+\!\!2t\!\!+\!\!2t\ln\frac{a_{q}+6t}t, & \mbox{if}\ a_{q}\geq -5t,\\
a_0\!-\!a_{q}\!+\!2a_{q}+12t, & \mbox{if}\ a_{q}<-5t ,\end{array}\right.
\end{align*}
If $a_q\geq -5t$, then by using \eqref{eq15-1-14-1}, we have $\sum_{i=0}^{g-1}|X_{i+1}|\leq a_{0}+6t+2t\ln2$. If $a_q<-5t$, then $\sum_{i=0}^{g-1}|X_{i+1}|\leq a_0+a_q+12t\leq a_{0}+7t$. In any case, Claim 4 is true.

\vskip 0.2cm \noindent {\em Claim 5.} For any 2-connected graph $H$, $f(H)\leq2|V(H)|-5.$
\vskip 0.2cm

We prove the Claim by induction on the number of nodes of $H$. If $|V(H)|=3$, then $H$ is a triangle and $f(H)=2|V(H)|-5$. Suppose the Claim is true when $V(H)=n-1$. Consider the case $|V(H)|=n$. If $H$ is a cycle or 3-connected, then by the definition of potential function $f$, we have $f(C_0)=2n-5$ or $f(C_0)=1\leq2n-5$. Otherwise, let $S$ be a good 2-separator and $C_{1}^{S},C_{2}^{S},\ldots,C_{l}^{S}$ be the marked $S$-components of $H$. By Lemma \ref{lemma1},  $C_{1}^{S},C_{2}^{S},\ldots,C_{l}^{S}$ are 2-connected and thus $f(C_{i}^{S})\leq2|C_{i}^{S}|-5$ for $1\leq i\leq l$ by induction hypothesis. Since $\sum_{i=1}^{l}|C_{i}^{S}|=|V(H)|+2l-2$ and any brick of $H$ is completely contained in some $C_{i}^{S}$,
so $f(H)=\sum_{i=1}^{l}f(C_{i}^{S})\leq\sum_{i=1}^{l}(2|C_{i}^{S}|-5)
=2|V(H)|-l-4\leq2|V(H)|-5.$ Claim 5 is proved.

Combing Claim 1 and Claim 5,
\begin{equation}\label{eq13-1-14-4}
a_{0}=f(C_{0})-1-6t<(2\alpha-6)t.
\end{equation}
So, if $\alpha<4$, then $a_{0}<2t$. By Claim 4 and inequality (\ref{eq13-1-14-4}), $\sum_{i=1}^{g}|X_{i}|\leq(2\alpha+2\ln2)t$. If $\alpha\geq4$, then $2\alpha-6\geq 2$. We see from Claim 3, Claim 4, and inequality (\ref{eq13-1-14-4}) that $\sum_{i=1}^{g}|X_{i}|\leq(8+2\ln(2\alpha-6))t$ holds no matter whether $a_0\geq 2t$ or $a_0<2t$.

Combining the above analysis with Claim 1 and the fact
$$|C_{g}|=|C_{0}|+|X_{1}|+\ldots+|X_{g}|,$$
we see that $C_g$ is a $\gamma$-approximation.
\end{proof}

\section{Conclusion}\label{sec5}
In this paper, we have presented a polynomial-time $\gamma$-approximation algorithm for the minimum $(3,m)$-CDS problem, $\gamma=\alpha+8+2\ln(2\alpha-6)$ for $\alpha\geq4$ and $\gamma=3\alpha+2\ln2$ for $\alpha<4$, where $\alpha$ is the approximation ratio for the minimum $(2,m)$-CDS problem. This is the first performance guaranteed approximation algorithm for minimum $(3,m)$-CDS on a general graph and also gives a big improvement on performance ratio of previously known approximation algorithms on unit disk graphs.

For future studies, a natural question is whether the general $(k,m)$-CDS problem also admits an approximation within factor $\ln \delta+o(\ln\delta)$. Recently, CDS considering routing-cost has been studies extensively \cite{Ding,DuH1,DuH,Zhang3}. However, nothing has been done on fault-tolerant issue. This is also a direction for our further research.

\section*{Acknowledgment}
This research is supported by NSFC (11771013,11531011,61751303).

\end{document}